\documentclass[copyright,creativecommons]{eptcs}

\usepackage{iftex}

\ifpdf
  \usepackage{underscore}         
  \usepackage[T1]{fontenc}        
\else
  \usepackage{breakurl}           
\fi

\title{Comparing the Update Expressivity of Communication Patterns and Action Models}
\author{
Armando Casta\~neda
\institute{Instituto de Matem\'aticas \\Universidad Nacional Aut\'onoma de M\'exico}
\email{armando.castaneda@im.unam.mx}
\and
Hans van Ditmarsch
\institute{University of Toulouse, CNRS, IRIT}
\email{hans.van-ditmarsch@irit.fr}
\and
David A.\ Rosenblueth
\institute{Instituto de Inv.\ en Matem\'aticas Aplicadas y en Sistemas \\
Universidad Nacional Aut\'onoma de M\'exico}
\email{drosenbl@unam.mx}
\and
Diego A.\ Vel\'azquez
\institute{Posgr.\ en Ciencia e Ingenier\'ia de la Computaci\'on \\
Universidad Nacional Aut\'onoma de M\'exico}
\email{velazquez-diego@ciencias.unam.mx}
}

\usepackage{amsfonts, amsmath, amssymb}
\usepackage{color}
\usepackage{tikz,url}
\usetikzlibrary{calc,shapes}

\newcommand{\eq}{\leftrightarrow}
\newcommand{\Eq}{\Leftrightarrow}
\newcommand{\imp}{\rightarrow}
\newcommand{\Imp}{\Rightarrow}

\newcommand{\et}{\wedge}
\newcommand{\vel}{\vee}
\newcommand{\Et}{\bigwedge}

\renewcommand{\phi}{\varphi}
\newcommand{\union}{\cup}
\newcommand{\Union}{\bigcup}
\newcommand{\inter}{\cap}
\newcommand{\Inter}{\bigcap}

\newcommand{\power}{\mathcal P}

\newcommand{\M}{\widehat{K}}

\newcommand{\bisim}{{\raisebox{.3ex}[0mm][0mm]{\ensuremath{\medspace \underline{\! \leftrightarrow\!}\medspace}}}}
\newcommand{\Nat}{\mathbb N}
\newcommand{\Naturals}{\Nat}
\newcommand{\domain}{\mathcal{D}}


\usepackage{newproof}

\newtheorem{theorem}{Theorem}
\newtheorem{example}[theorem]{Example}
\newtheorem{definition}[theorem]{Definition}
\newtheorem{proposition}[theorem]{Proposition}

\newtheorem{corollary}[theorem]{Corollary}

\newtheorem{lemma}[theorem]{Lemma}

\newcommand{\lang}{\mathcal L}

\newcommand{\weg}[1]{}
\newcommand{\pre}{\mathsf{pre}}

\newcommand{\cg}{R}

\newcommand{\cp}{\pmb{R}}

\newcommand{\byz}{\pmb{\textit{Byz}}}
\newcommand{\IS}{\pmb{\textit{IS}}}
\newcommand{\Sq}{\pmb{\textit{Sq}}}

\newcommand{\U}{\pmb{U}}
\newcommand{\SSigma}{\pmb{\Sigma}}
\newcommand{\view}{\mathsf{view}}

\begin{document}
\maketitle

\begin{abstract}
Any kind of dynamics in dynamic epistemic logic can be represented as an action model. Right? Wrong! In this contribution we prove that the update expressivity of communication patterns is incomparable to that of action models. Action models, as update mechanisms, were proposed by Baltag, Moss, and Solecki in 1998 and have remained the nearly universally accepted update mechanism in dynamic epistemic logics since then. Alternatives, such as arrow updates that were proposed by Kooi and Renne in 2011, have update equivalent action models. More recently, the picture is shifting. Communication patterns are update mechanisms originally proposed in some form or other by {\AA}gotnes and Wang in 2017 (as resolving distributed knowledge), by Baltag and Smets in 2020 (as reading events), and by Vel\'azquez, Casta\~neda, and Rosenblueth in 2021 (as communication patterns). All these logics have the same expressivity as the base logic of distributed knowledge. However, their update expressivity, the relation between pointed epistemic models induced by such an update, was conjectured to be different from that of action model logic. Indeed, we show that action model logic and communication pattern logic are incomparable in update expressivity. We also show that, given a history-based semantics and when restricted to (static) interpreted systems, action model logic is (strictly) more update expressive than communication pattern logic. Our results are relevant for distributed computing wherein oblivious models involve arbitrary iteration of  communication patterns.
\end{abstract}

\section{Introduction}

It is well known that the expressivity of public announcement logic is the same as that of epistemic logic \cite{plaza:1989}. This is proved by way of a reduction system showing that every public announcement formula is equivalent to one without public announcement modalities. Similarly, the expressivity of the logic of distributed knowledge with public announcements is the same as that of the logic of distributed knowledge \cite{AgotnesW17}. Again, this is shown by a reduction. A reduction also exists for the logic of distributed knowledge with action models \cite{baltagetal:1998}; see \cite[Fig.\ 5 and Th.\ 15]{WangA15} and the reduction axiom called AD in \cite[Fig.\ 9]{WangA15}.

Distributed knowledge can also be extended with dynamic modalities for communication patterns, an update mechanism proposed in \cite{diego:2021}. The resulting communication pattern logic is as expressive as the logic of distributed knowledge: we can reduce formulas with dynamic modalities to formulas without \cite{cdrv:2022}. This logic is a slight generalization of logics with similar modalities also showing this by reduction \cite{AgotnesW17,Baltag20}. A detailed comparison to these other proposals is found in \cite{cdrv:2022}.

We conclude that the logic of communication patterns and distributed knowledge has the same expressivity as the logic of action models and distributed knowledge, because they both reduce to the logic of distributed knowledge. A different matter, however, is so-called \emph{update expressivity} \cite{jveetal:2012,kooirenne,hvdetal.aus:2020}.

We will compare the update expressivity of communication pattern logic and action model logic. Communication patterns, like action models, are (induce) {\em updates} transforming pointed epistemic models into other pointed epistemic models. Is there for each communication pattern an action model defining the same update, and vice versa? Communication patterns can always be executed, but action models cannot always be executed, for example a truthful public announcement of $p$ requires $p$ to be true in some world. We can therefore expect a trivial difference in update expressivity. It becomes non-trivial if we also consider union of relations, such as non-deterministic choice between the announcement of $p$ and the announcement of $\neg p$. 

This is an overview of the structure of our contribution. Sect.~\ref{sect:tp} recalls communication pattern logic, action model logic, and update expressivity. In Sect.~\ref{sect:iamis} we show that for each communication pattern there is an update equivalent action model when executed on epistemic models that are interpreted systems. However, the resulting model may not be an interpreted system. 
In Sect.~\ref{sect:inc} we then show that communication pattern logic and action model logic are indeed incomparable in update expressivity on the class of epistemic models.
Finally, in Sect.~\ref{sect:hist} we propose a history-based semantics for communication pattern logic for which the class of interpreted systems is, after all, closed under updates, and we show that for each iterated communication pattern there is then an update equivalent action model.

\section{Communication pattern logic and action model logic} \label{sect:tp}

\subsection{Language} \label{sec.language}

Given are a finite set of \emph{agents} $A$ and a set of \emph{propositional variables} $P \subseteq P' \times A$, where $P'$ is a countable set. For $B \subseteq A$ and $Q \subseteq P$, $Q \inter (P' \times B)$ is denoted $Q_B$ (where  $Q_a$ is $Q_{\{a\}}$), and $(p,a) \in P$ is denoted $p_a$. The set $P_a$ consists of the \emph{local variables} of agent $a$. In this work we consider the following languages.

\begin{definition}[Language] Given $A$ and $P$, the language $\lang^{\times\circ}$ is defined by BNF (where $p_a \in P$, $B \subseteq A$):
\[\begin{array}{ll} \phi := p_a \mid \neg \phi \mid \phi \et \phi \mid D_B \phi \mid [\cp,\cg]\phi \mid [\U,e]\phi
\end{array}\]
where $(\cp,\cg)$ and $(\U,e)$ are structures defined below, with $\cg\in\cp$ and $e$ in the domain of $\U$. Furthermore, $\lang^\circ$ is the language without $[\U,e]\phi$, $\lang^\times$ without $[\cp,\cg]\phi$, and $\lang^-$ without either.
\end{definition}
\emph{Epistemic} formula $D_B \phi$ is read as `the agents in $B$ have distributed knowledge of $\phi$'. We write $K_a \phi$ for $D_{\{a\}}\phi$, for `agent $a$ knows $\phi$'. \emph{Dynamic} formula $[\cp,\cg]\phi$ means `after execution of communication graph $\cg$ from communication pattern $\cp$, $\phi$ is true', and $[\U,e]\phi$ means `after execution of action $e$ from action model $\U$, $\phi$ is true'. Dynamic modalities will be interpreted as updates of epistemic models.

By notational abbreviation we define $[\U]\phi := \Et_{e \in E} [\U,e]\phi$ and $[\cp]\phi := \Et_{\cg\in\cp}[\cp,\cg]\phi$. 
The \emph{modal depth} of a formula $\phi\in\lang^{\circ\times}$ is inductively defined as: $md(p_a)=0$, $md(\neg\phi)= md(\phi)$, $md(\phi\et\psi) = \max \{md(\phi),md(\psi)\}$, $md(D_B\phi) := md(\phi)+1$, $md([\cp,\cg]\phi) := md(\phi)$, $md([\U,e]\phi) = md(\U) + md(\phi)$, where $md(\U) = \max \{ md(\pre(f)) \mid f \in E \}$. In $md(\U)$, the formulas $\pre(f)$ are defined below.

If $P$ is finite and $Q \subseteq P$, \emph{description} $\delta_Q$ (of valuation $Q$) is defined as $\Et_{p_a\in Q} p_a \et \Et_{p_a \in P{\setminus}Q} \neg p_a$. If $P$ is infinite and $Q \subseteq Q' \subset P$ are finite subsets of $P$, description $\delta_{Q,Q'}$ is defined as $\Et_{p_a\in Q} p_a \et \Et_{p_a \in Q'{\setminus}Q} \neg p_a$. 

\subsection{Structures} \label{sec.structures}


\begin{definition}[Epistemic model]
An {\em epistemic model} $M$ is a triple $(W,\sim,L)$, where for all $a \in A$, $\sim_a$ is an {\em equivalence relation} on the \emph{domain} $W$ (also denoted $\domain(M)$) consisting of \emph{states} (or \emph{worlds}), and where $L: W \imp \power(P)$ is the \emph{valuation} (function). For $\Inter_{a \in B} \sim_a$ we write $\sim_B$, and for $\{ w' \in W \mid w' \sim_a w\}$ we write $[w]_a$. We further require epistemic models to be \emph{local}: for all $a \in A$ and $v,w \in W$, $v \sim_a w$ implies $L(v)_a =L(w)_a$; if for all $a,v,w$ also $L(v)_a =L(w)_a$ implies $v \sim_a w$, it is a \emph{(static) interpreted system}. 
\end{definition}
An epistemic model encodes uncertainty among the agents about the value of other agents' local variables and about the knowledge of other agents.

\begin{definition}[Communication pattern]
A {\em communication graph} $\cg$ is a reflexive binary relation on the set of agents $A$, that is, $\cg \in \power(A \times A)$ and such that for all $a \in A$, $(a,a) \in \cg$. A {\em communication pattern} $\cp$ is a set of communication graphs, that is, $\cp \subseteq \power(A \times A)$.
\end{definition}
Expression $(a,b) \in \cg$ means that the message sent by $a$ is received by $b$.
For $(a,b) \in \cg$ we write $a \cg b$. We let $\cg b := \{a \in A \mid a \cg b\}$, $\cg B := \Union_{b \in B} \cg b$, and $\cg' B \equiv \cg B$ if $\cg' a = \cg a$ for all $a \in B$. The \emph{identity relation} $I$ is $\{(a,a) \mid a \in A\}$. The \emph{universal relation} $U$ is $A \times A$. 
A communication graph is a reflexive relation, because we assume that an agent always receives her own message. But not every other agent may receive the message. We could alternatively have defined a communication pattern as a structure with equivalence relations $\sim_a$ for each agent, namely by defining that $\cg \sim_a \cg'$ iff $\cg a = \cg' a$, as in \cite{diego:2021}.

\begin{definition}[Action model]
An {\em action model} $\U = (E, \sim, \pre)$ consists of a {\em domain} $E$ of {\em actions}, an {\em accessibility function} $\sim \ : A \imp {\mathcal P}(E \times E)$, where each $\sim_a$ is an equivalence relation, and a {\em precondition function} $\pre: E \imp \lang^-$.
\end{definition}
An action model \cite{baltagetal:1998} is a structure like an epistemic model but with a precondition function, associating a formula with each state.
The restriction to language $\lang^-$ for preconditions excuses us from explanations involving mutual recursion. 

For all the above structures $X$ we also consider \emph{pointed} and \emph{multi-pointed} versions that are pairs $(X,x)$ with $x \in X$ (or $x \in \domain(X)$) resp.\ $(X,Y)$ with $Y \subseteq X$ ($Y \subseteq \domain(X)$), so we have pointed epistemic models $(M,w)$, multi-pointed action models $(\U,T)$, etcetera.

Communication patterns are fairly novel in dynamic epistemic logic. We note that similar structures or modalities were proposed in \cite{AgotnesW17} (resolving distributed knowledge), in \cite{Baltag20} (reading events), and in \cite{diego:2021} (communication patterns). The communication patterns in \cite{diego:2021} have preconditions, just as action models. The reading events in \cite{Baltag20} and resolution in \cite{AgotnesW17} are communication patterns without uncertainty over the reception of messages. Then again, communication patterns permit less uncertainty than the arbitrary reading events in \cite{Baltag20}. These differences are discussed in \cite{cdrv:2022}. Examples are given in Sect.~\ref{sect:iamis}.

One can update an epistemic model with a communication pattern and one can also update an epistemic model with an action model. The updated epistemic model encodes how the knowledge has changed after agents have informed each other according to the update.

Given an epistemic model $M=(W,\sim,L)$ and a communication pattern $\cp$, the \emph{updated} epistemic model $M \odot \cp = (\dot W, \dot\sim, \dot L)$ (the \emph{update} of $M$ with $\cp$) is defined as: 
\[\begin{array}{lcl}
\dot W & = & W \times \cp \\
(w,\cg) \dot\sim_a (w',\cg') & \text{iff} &  w \sim_{\cg a} w' \text{ and } \cg a = \cg' a \\
\dot L(w,\cg) & = & L(w)
\end{array}\]
The relation $\dot\sim_a$ is the intersection $\sim_{\cg a}$ of the relations of all agents from which $a$ received messages.

Given an epistemic model $M = (W,\sim,L)$ and an action model $\U = (E, \sim, \pre)$, the updated epistemic model $M \otimes \U = (W^\times, \sim^\times, L^\times)$ is defined as: \[\begin{array}{lcl}  W^\times & \ \ = \ \ & \{ (v,f) \mid v \in W, f \in E, \text{ and } M,v \models \pre(f) \} \\  (v,f) \sim^\times_a (v',f') & \text{iff} & v \sim_a v' \text{ and } f \sim_a f'  \\  L^\times(v,f) & = & L(v)
\end{array} \]
The satisfaction relation $\models$ to determine $M,v \models \pre(f)$ is defined below, by mutual recursion.

In order to compare the information content of epistemic models we need the notions of \emph{(collective) bisimulation} and \emph{bounded (collective) bisimulation} (\emph{$n$-bisimulation}) \cite{blackburnetal:2001,Roelofsen07}. 

\begin{definition}[Collective bisimulation]
A relation $Z$ between the domains of epistemic models $M = (W,\sim,L)$ and $M'=(W',\sim',L')$ is a {\em (collective) bisimulation}, notation $Z: M \bisim M'$, if for all $(w,w') \in Z$:
\begin{itemize}
\item {\bf atoms}: for all $p_a \in P$, $p_a \in L(w)$ iff $p_a \in L'(w')$;
\item {\bf forth}: for all nonempty $B \subseteq A$ and for all $v \in W$, if $w \sim_B v$ then there is $v'\in W'$ such that $(v,v') \in Z$ and $w' \sim_B v'$;
\item {\bf back}: for all nonempty $B \subseteq A$ and for all $v' \in W'$, if $w' \sim_B v'$ then there is $v\in W$ such that $(v,v') \in Z$ and $w \sim_B v$.
\end{itemize}
We additionally define a  \emph{collective bisimulation bounded by $n$}, as a set of relations $Z^0 \supseteq Z^1 \dots \supseteq Z^n$ of $i$-bisimulations for $0 \leq i \leq n$. Relation $Z^0$ merely satisfies {\bf atoms},
and for all $(w,w') \in Z^{n+1}$:
\begin{itemize}
\item {\bf atoms}: for all $p_a \in P$, $p_a \in L(w)$ iff $p_a \in L'(w')$;
\item {\bf forth}-$(n+1)$:  for all nonempty $B \subseteq A$ and for all $v \in W$, if $w \sim_B v$ then there is $v'\in W'$ such that $(v,v') \in Z^n$ and $w' \sim_B v'$.
\item {\bf back}-$(n+1)$:  for all nonempty $B \subseteq A$ and for all $v' \in W'$, if $w' \sim_B v'$ then there is $v\in W$ such that $(v,v') \in Z^n$ and $w \sim_B v$.
\end{itemize}
If there is a bisimulation $Z$ between $M$ and $M'$ we write $M \bisim M'$, and if there is one containing $(w,w')$ we write $(M,w)\bisim (M',w')$. We then say that $M$ and $M'$, respectively $(M,w)$ and $(M',w')$, are \emph{bisimilar}. If $Z$ is bounded by $n$ we write $(M,w)\bisim^n (M',w')$ and we say that $(M,w)$ and $(M',w')$ are \emph{$n$-bisimilar}.
\end{definition}

Bounded bisimulations are used to compare models $(M,w)$ and $(M',w')$ up to a depth $n$ from the respective points $w$ and $w'$. Collective $n$-bisimilarity implies that both models satisfy the same $\lang^-$ formulas of modal depth at most $n$, as a minor variation of the standard result in \cite{blackburnetal:2001}. 

To compare dynamic modalities we define \emph{updates} and {\em update expressivity}.
\begin{definition}[Update, update expressivity]
An \emph{update} (or \emph{update relation}) is a binary relation $X$ on a class of pointed epistemic models. Given updates $X$ and $Y$, {\em $X$ is update equivalent to $Y$}, if for all pointed epistemic models $(M,w)$ the update of $(M,w)$ with $X$ is collectively bisimilar to the update of $(M,w)$ with $Y$. Update modalities $[X]$ and $[Y]$ are update equivalent, if $X$ and $Y$ are update equivalent. (For more refined notions see \cite{hvdetal.aus:2020}.)

A language $\lang$ is {\em at least as update expressive as} $\lang'$ if for every update modality $[X]$ of $\lang'$ there is an update modality $[Y]$ of $\lang$ such that $X$ is update equivalent to $Y$. Language $\lang$ is {\em equally update expressive as} $\lang'$ (or `as update expressive as'), if $\lang$ is at least as update expressive as $\lang'$ and $\lang'$ is at least as update expressive as $\lang$. Language $\lang$ is {\em (strictly) more update expressive than} $\lang'$, if $\lang$ is at least as update expressive as $\lang'$ and $\lang'$ is not at least as update expressive as $\lang$. Languages $\lang$ and $\lang'$ are incomparable in update expressivity if if $\lang$ is not at least as update expressive as $\lang'$ and $\lang'$ is not at least as update expressive as $\lang$.
\end{definition}

\subsection{Semantics}

\begin{definition}[Semantics on epistemic models] Given $M = (W,\sim,L)$ and $w \in W$, the \emph{satisfaction relation} $\models$ is defined by induction on $\phi\in\lang^{\times\circ}$ (where $p \in P$, $a \in A$, $B \subseteq A$, $(\cp,\cg)$ a pointed communication pattern and $(\U,e)$ a pointed action model).
\[ \begin{array}{lcl}
M,w \models p_a & \text{iff} & p_a \in L(w)\\
M,w \models \neg\phi & \text{iff} & M,w \not\models \phi\\
M,w \models \phi\et\psi & \text{iff} & M,w \models \phi \text{ and } M,w \models \psi \\
M,w \models D_B \phi & \text{iff} & M,v \models \phi \text{ for all } v \sim_B w \\
M,w \models [\cp,\cg]\phi & \text{iff} & M \odot \cp, (w,\cg) \models \phi \\
M,w \models [\U,e]\phi & \text{iff} & M,w \models\pre(e) \text{ implies } M \otimes \U, (w,e) \models \phi
\end{array} \]
Formula $\phi$ is {\em valid on $M$} iff for all $w \in W$, $M,w \models \phi$; formula $\phi$ is {\em valid} iff for all $(M,w)$, $M,w \models \phi$. 
\end{definition}

The (required) locality of epistemic models causes distributed knowledge to have slightly different properties in our semantics. In the standard semantics of distributed knowledge $D_B \phi \eq \phi$ is invalid for any $B \subseteq A$. Whereas in our semantics $D_A \phi \eq \phi$ is valid although $D_B \phi \eq \phi$ for $B \subset A$ remains invalid. 

A complete axiomatization of the validities of $\lang^\circ$ (communication pattern logic), reducing the dynamics, is given in \cite{cdrv:2022} (similar to \cite{AgotnesW17,Baltag20}). A complete axiomatization of the validities of $\lang^\times$ (action model logic), reducing the dynamics, is given in \cite{baltagetal:1998}. The language $\lang^{\times\circ}$ is not of independent interest.

\section{Induced action models for interpreted systems} \label{sect:iamis}

In this section, let $P$ be finite. From each communication pattern we will construct an \emph{induced action model}. We will show that communication patterns are update equivalent to induced action models when executed in an interpreted system. However, the update of an interpreted system with a communication pattern may not be an interpreted system, and the update of an epistemic model that is not an interpreted system with a communication pattern may not have the same update effect as its induced action model, of which we will give an example.

\begin{definition}[Action model induced by a communication pattern] \label{def.induced} Given a communication pattern $\cp$, define \emph{induced action model} $\U(\cp) = (E,\sim,\pre)$ as follows (where $\cg,\cg' \in \cp$, $Q,Q' \subseteq P$, $a \in A$).
\[ \begin{array}{lcl}
E & = & \cp \times \power(P) \\
(\cg,Q) \sim_a (\cg',Q') & \text{iff} & \cg a = \cg' a \text{ and } Q_{\cg a} = Q'_{\cg' a} \\
\pre(\cg,Q) & = & \delta_Q
\end{array}\]
\end{definition}
Informally, this says that two actions are indistinguishable for an agent if the agent receives messages from the same agents ($\cg a = \cg' a$) and if the messages it receives from those agents are the same ($Q_{\cg a} = Q'_{\cg' a}$). As $\cp$ and $P$ are finite, $\U(\cp)$ has a finite domain, so that modality $[\U(\cp)]$ is in $\lang^\times$. The size of the action model $\U(\cp)$ is $|\cp \times \power(P)| = |\cp| \cdot 2^{|P|}$. Therefore, $\U(\cp)$ is exponentially larger than $\cp$.

\begin{proposition} \label{prop.xxxx}
Let an interpreted system $M$ and $\cp$ be given. Then $M \odot \cp$ is bisimilar to $M \otimes \U(\cp)$.
\end{proposition}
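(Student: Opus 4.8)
The plan is to exhibit an explicit, and in fact bijective, bisimulation between the two updated models. The first observation is that the action-model update collapses because preconditions are complete descriptions. A world of $M \otimes \U(\cp)$ has the form $(v,(\cg,Q))$ with $M,v \models \pre(\cg,Q) = \delta_Q$; since $\delta_Q$ fixes the entire valuation, this forces $Q = L(v)$. Hence the only surviving actions at $v$ are those of the shape $(\cg,L(v))$, and $W^\times$ is in natural bijection with $\dot W = W \times \cp$. I would accordingly take the relation $Z$ that pairs each $(w,\cg) \in \dot W$ with $(w,(\cg,L(w))) \in W^\times$, and check that it is a collective bisimulation $M \odot \cp \bisim M \otimes \U(\cp)$. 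Note that $Z$ is total on both sides (it is the graph of a bijection), so it witnesses bisimilarity under any reading of the definition.

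The \textbf{atoms} clause is immediate, since $\dot L(w,\cg) = L(w) = L^\times(w,(\cg,L(w)))$. The substance of the argument is to show that the two accessibility relations agree through $Z$, i.e.\ that for every $a \in A$,
\[ (w,\cg)\,\dot\sim_a\,(v,\cg') \quad\text{iff}\quad (w,(\cg,L(w)))\,\sim^\times_a\,(v,(\cg',L(v))). \]
Unfolding the definitions, the left-hand side says $w \sim_{\cg a} v$ and $\cg a = \cg' a$, whereas the right-hand side says $w \sim_a v$, $\cg a = \cg' a$, and $L(w)_{\cg a} = L(v)_{\cg' a}$. Two features of the hypotheses make these coincide. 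First, reflexivity of communication graphs gives $a \in \cg a$, so $w \sim_{\cg a} v$ already entails $w \sim_a v$. Second, because $M$ is an interpreted system, $\sim_b$ is exactly equality of the $b$-local valuation, so, given $\cg a = \cg' a$, the condition $L(w)_{\cg a} = L(v)_{\cg' a}$ is precisely $w \sim_b v$ for all $b \in \cg a$, that is, $w \sim_{\cg a} v$. Chasing both implications with these two facts yields the displayed equivalence.

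Intersecting this per-agent equivalence over any nonempty $B \subseteq A$ gives $(w,\cg)\,\dot\sim_B\,(v,\cg')$ iff $(w,(\cg,L(w)))\,\sim^\times_B\,(v,(\cg',L(v)))$, after which \textbf{forth} and \textbf{back} follow at once from $Z$ being the graph of a bijection: a $\dot\sim_B$-successor $(v,\cg')$ of $(w,\cg)$ is matched by its $Z$-image, which is a $\sim^\times_B$-successor of $(w,(\cg,L(w)))$; conversely every world of $W^\times$ has the form $(v',(\cg',L(v')))$, so any $\sim^\times_B$-successor of $(w,(\cg,L(w)))$ has a $Z$-preimage $(v',\cg')$ that is a $\dot\sim_B$-successor of $(w,\cg)$.

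I expect the only delicate point to be the accessibility equivalence above: the two updates build indistinguishability in structurally different ways---$M \odot \cp$ intersects the sender relations $\sim_{\cg a}$, while $M \otimes \U(\cp)$ multiplies a single-agent relation $\sim_a$ with an action relation encoding $Q_{\cg a} = Q'_{\cg' a}$---and reconciling them uses the interpreted-system hypothesis in an essential way, specifically the direction $L(v)_b = L(w)_b \Rightarrow v \sim_b w$. This is exactly the assumption that cannot be dropped, consistent with the failure announced for epistemic models that are not interpreted systems.
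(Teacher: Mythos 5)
Your proposal is correct and takes essentially the same route as the paper: the same relation $Z : (w,\cg) \mapsto (w,(\cg,L(w)))$, and the same per-agent reconciliation of $\dot\sim_a$ with $\sim^\times_a$ via reflexivity of communication graphs ($a \in \cg a$) and the interpreted-system hypothesis $L(v)_b = L(w)_b \Rightarrow v \sim_b w$, which is exactly the paper's step $(*)$. Your added observation that the preconditions $\delta_Q$ force $Q = L(v)$, so $Z$ is the graph of a bijection onto $W^\times$, is a nice touch that makes the \textbf{back} clause (which the paper dismisses as ``similar to forth'') genuinely immediate.
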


\begin{proof}
Let $M = (W,\sim,L)$. Define the following relation $Z$ between (the domains of)  $M \odot \cp$ and $M \otimes \U(\cp)$: $Z : (w,\cg) \mapsto (w,(\cg, L(w)))$. We show that $Z$ defines a bisimulation. 

Let $((w,\cg), (w,\cg, L(w)) \in Z$.

{\bf atoms}:  Straightforwardly, $\dot L(w,\cg)=L(w)= L^\times(w,(\cg, L(w)))$. 

{\bf forth}:  Assume $(w,\cg) \dot\sim_B (v,S)$. We claim that $(v,(S,L(v)))$ is the required witness to show {\bf forth}. Obviously $((v,S),(v,(S,L(v))) \in Z$. We also have:

\medskip

\noindent
$(w,\cg) \dot\sim_B (v,S)$ \quad $\Eq $ \\
for all $a \in B$, $(w,\cg) \dot\sim_a (v,S)$ \quad $\Eq$ \hfill by definition of $\dot\sim_a$ \\
for all $a \in B$, $w \sim_{\cg a} v$ and $\cg a= S a$ \quad $\Eq $ \hfill (*) \\
for all $a \in B$, $w \sim_a v$, $\cg a = S a$, and $L(w)_{\cg a} = L(v)_{S a}$ \quad $\Eq $ \\
for all $a \in B$, $w \sim_a v$ and $(\cg,L(w)) \sim_a (S,L(v))$ \quad $\Eq $ \\
for all $a \in B$, $(w,(\cg,L(w))) \sim^\times_a (v,(S,L(v)))$ \quad $\Eq $ \\
$(w,(\cg,L(w))) \sim^\times_B (v,(S,L(v)))$.

\medskip

(*): As $M$ is an interpreted system, for all agents $b \in \cg a$, $w \sim_b v$ iff $L(w)_b = L(v)_b$, in other words: $w \sim_{\cg a} v$ iff $L(w)_{\cg a} = L(v)_{S a}$. As in particular $a \in \cg a$, $w \sim_a v$ on the right-hand side of the equation also follows from $L(w)_{\cg a} = L(v)_{S a}$.

{\bf back}: Similar to {\bf forth}.
\end{proof}


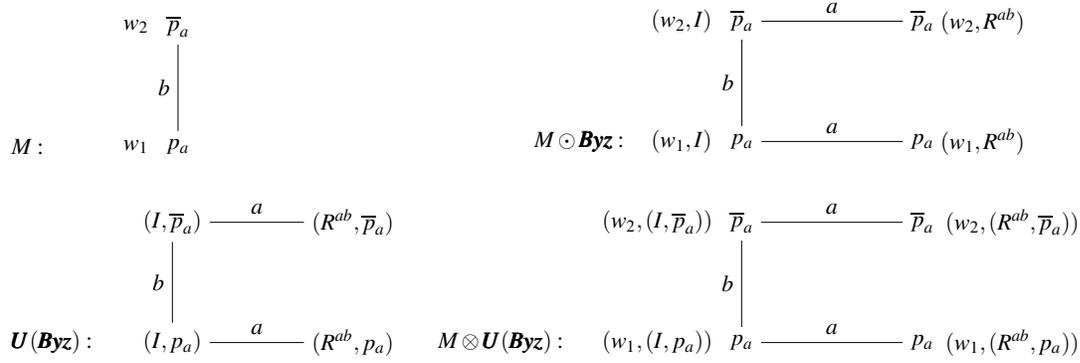
\begin{figure}
\scalebox{.8}{
\begin{tikzpicture}
\node (m) at (-6.5,0) {$M:$};
\node (bm00) at (-4.7,0) {$w_1$};
\node (am01) at (-4.7,2) {$w_2$};
\node (m00) at (-4,0) {$p_a$};
\node (m01) at (-4,2) {$\overline{p}_a$};
\draw[-] (m00) -- node[left] {$b$} (m01);
\end{tikzpicture}
\quad \hspace{4.9cm}
\begin{tikzpicture}
\node (m) at (-.7,0) {$M \odot \byz:$};
\node (b00) at (1,0) {$(w_1, I)$};
\node (b10) at (6,0) {$(w_1, R^{ab})$};
\node (a01) at (1,2) {$(w_2, I)$};
\node (a11) at (6,2) {$(w_2,R^{ab})$};
\node (00) at (2,0) {$p_a$};
\node (10) at (5,0) {$p_a$};
\node (01) at (2,2) {$\overline{p}_a$};
\node (11) at (5,2) {$\overline{p}_a$};
\draw[-] (00) -- node[above] {$a$} (10);
\draw[-] (00) -- node[left] {$b$} (01);
\draw[-] (01) -- node[above] {$a$} (11);
\end{tikzpicture}
}

\bigskip

\scalebox{.8}{
\begin{tikzpicture}
\node (m) at (-2,0) {$\U(\byz):$};
\node (00) at (0,0) {$(I, p_a)$};
\node (10) at (3,0) {$(R^{ab}, p_a)$};
\node (01) at (0,2) {$(I, \overline{p}_a)$};
\node (11) at (3,2) {$(R^{ab}, \overline{p}_a)$};
\draw[-] (00) -- node[above] {$a$} (10);
\draw[-] (00) -- node[left] {$b$} (01);
\draw[-] (01) -- node[above] {$a$} (11);
\end{tikzpicture}
\quad
\begin{tikzpicture}
\node (m) at (-2,0) {$M \otimes \U(\byz):$};
\node (b00) at (.6,0) {$(w_1, (I, p_a))$};
\node (b10) at (6.5,0) {$(w_1, (R^{ab}, p_a))$};
\node (a01) at (.6,2) {$(w_2, (I, \overline{p}_a))$};
\node (a11) at (6.5,2) {$(w_2, (R^{ab}, \overline{p}_a))$};
\node (00) at (2,0) {$p_a$};
\node (10) at (5,0) {$p_a$};
\node (01) at (2,2) {$\overline{p}_a$};
\node (11) at (5,2) {$\overline{p}_a$};
\draw[-] (00) -- node[above] {$a$} (10);
\draw[-] (00) -- node[left] {$b$} (01);
\draw[-] (01) -- node[above] {$a$} (11);
\end{tikzpicture}
}
\caption{Communication pattern and action model for Byzantine Generals}
\label{figure.byz}
\end{figure}

\begin{example}[Byzantine generals] \label{ex.uq}
Byzantine attack \cite{lamportetal:1982,DworkM90} is a communication pattern given in \cite{diego:2021}. Let $A = \{a,b\}$ and $P = \{p_a\}$. Generals $a$ and $b$ wish to schedule an attack, where $b$ desires to learn whether $a$ wants to `attack at dawn' ($p_a$) or `attack at noon' ($\neg p_a$). General $a$ now sends her decision to general $b$ in a message that may fail to arrive. This fits the communication pattern $\cp = \{I,R^{ab}\}$ where $R^{ab} = I \union \{(a,b)\}$, which models that $a$ is uncertain whether her message has been received by $b$. In this instantiation of Byzantine generals, general $b$ has no local variable. 

The communication pattern $\byz = \{I,R^{ab}\}$ where $R^{ab} = I \union \{(a,b)\}$. We have that $I a = R^{ab} a = \{a\}$ whereas $I b = \{b\}$ and $R^{ab} b = \{a,b\}$ (see also  \cite[Figure 1]{diego:2021} and \cite[Example 7]{cdrv:2022}). 

Fig.~\ref{figure.byz} depicts the initial epistemic model $M$ wherein agent $b$ is uncertain about the value of a variable $p_a$ of agent $a$, the updated model $M \odot \byz$, the action model $\U(\byz)$, and updated model $M \otimes \U(\byz)$. The states in epistemic models are also labelled with valuations, where $p_a$ stands for $\{p_a\}$ and $\overline{p}_a$ stands for $\emptyset$. Model $M$ is an interpreted system in the vacuous sense that if agent $b$ were to have local variables we could assume their value to be the same in both states. In $\U(\byz)$, the precondition of actions $(I, \{p_a\})$ and $(R^{ab}, \{p_a\})$ is $p_a$, and that of actions $(I, \emptyset)$ and $(R^{ab}, \emptyset)$ is $\neg p_a$. (In the figure, for visual consistency, these actions are written as $(I, p_a)$, $(R^{ab}, p_a)$, $(I, \overline{p}_a)$, and $(R^{ab}, \overline{p}_a)$.) Model $M \otimes \U(\byz)$ is bisimilar, as required, to $M\odot \byz$ and even isomorphic.
\end{example}

When model $M$ is an interpreted system, $M \odot \cp$ may not be an interpreted system, as, in a way, $M \odot \byz$ in Example~\ref{ex.uq}. If agent $b$ were to have local variables, their value would be the same in $w_1$ and in $w_2$ and thus also in the four worlds of the updated model. But now agent $b$ has three equivalence classes. It is therefore no longer an interpreted system.

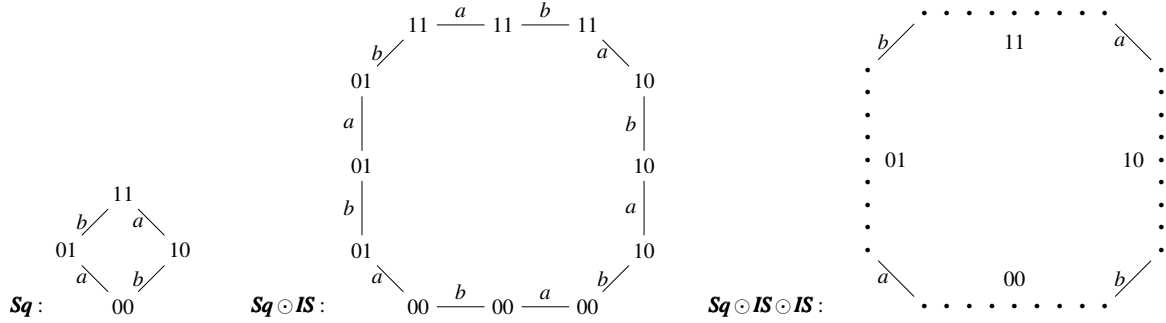
\begin{figure}
\scalebox{.75}{
\begin{tikzpicture}
\node (m) at (-.7,0) {$\Sq:$};
\node (00) at (1,0) {$00$};
\node (01) at (0,1) {$01$};
\node (10) at (2,1) {$10$};
\node (11) at (1,2) {$11$};
\draw[-] (00) -- node[left] {$a$} (01);
\draw[-] (10) -- node[left] {$a$} (11);
\draw[-] (00) -- node[left] {$b$} (10);
\draw[-] (01) -- node[left] {$b$} (11);
\end{tikzpicture}
}
\quad
\scalebox{.75}{
\begin{tikzpicture}
\node (m) at (-1.3,0) {$\Sq \odot \IS:$};
\node (00b) at (1,0) {$00$};
\node (00ab) at (2.5,0) {$00$};
\node (00a) at (4,0) {$00$};
\node (01b) at (0,1) {$01$};
\node (01ab) at (0,2.5) {$01$};
\node (01a) at (0,4) {$01$};
\node (10a) at (5,1) {$10$};
\node (10ab) at (5,2.5) {$10$};
\node (10b) at (5,4) {$10$};
\node (11a) at (1,5) {$11$};
\node (11ab) at (2.5,5) {$11$};
\node (11b) at (4,5) {$11$};
\draw[-] (00b) -- node[left] {$a$} (01b);
\draw[-] (00b) -- node[above] {$b$} (00ab);
\draw[-] (00ab) -- node[above] {$a$} (00a);
\draw[-] (10b) -- node[left] {$a$} (11b);
\draw[-] (10b) -- node[left] {$b$} (10ab);
\draw[-] (10ab) -- node[left] {$a$} (10a);
\draw[-] (00a) -- node[left] {$b$} (10a);
\draw[-] (01a) -- node[left] {$b$} (11a);
\draw[-] (01a) -- node[left] {$a$} (01ab);
\draw[-] (01ab) -- node[left] {$b$} (01b);
\draw[-] (11a) -- node[above] {$a$} (11ab);
\draw[-] (11ab) -- node[above] {$b$} (11b);
\end{tikzpicture}
}\quad
\scalebox{.75}{
\begin{tikzpicture}
\node (m) at (-1.8,0) {$\Sq \odot \IS \odot \IS:$};
\node (00) at (2.6,0.5) {$00$};
\node (11) at (2.6,4.7) {$11$};
\node (11) at (.5,2.6) {$01$};
\node (11) at (4.7,2.6) {$10$};
\node (00b) at (1,0) {\tiny $\bullet$};
\node (00b2) at (1.4,0) {\tiny $\bullet$};
\node (00b3) at (1.8,0) {\tiny $\bullet$};
\node (00b4) at (2.2,0) {\tiny $\bullet$};
\node (00b5) at (2.6,0) {\tiny $\bullet$};
\node (00b6) at (3.0,0) {\tiny $\bullet$};
\node (00b7) at (3.4,0) {\tiny $\bullet$};
\node (00b8) at (3.8,0) {\tiny $\bullet$};
\node (00b9) at (4.2,0) {\tiny $\bullet$};
\node (01a) at (0,1) {\tiny $\bullet$};
\node (01a2) at (0,1.4) {\tiny $\bullet$};
\node (01a3) at (0,1.8) {\tiny $\bullet$};
\node (01a4) at (0,2.2) {\tiny $\bullet$};
\node (01a5) at (0,2.6) {\tiny $\bullet$};
\node (01a6) at (0,3) {\tiny $\bullet$};
\node (01a7) at (0,3.4) {\tiny $\bullet$};
\node (01a8) at (0,3.8) {\tiny $\bullet$};
\node (01a9) at (0,4.2) {\tiny $\bullet$};
\node (10a) at (5.2,1) {\tiny $\bullet$};
\node (10a2) at (5.2,1.4) {\tiny $\bullet$};
\node (10a3) at (5.2,1.8) {\tiny $\bullet$};
\node (10a4) at (5.2,2.2) {\tiny $\bullet$};
\node (10a5) at (5.2,2.6) {\tiny $\bullet$};
\node (10a6) at (5.2,3) {\tiny $\bullet$};
\node (10a7) at (5.2,3.4) {\tiny $\bullet$};
\node (10a8) at (5.2,3.8) {\tiny $\bullet$};
\node (10a9) at (5.2,4.2) {\tiny $\bullet$};
\node (11b) at (1,5.2) {\tiny $\bullet$};
\node (11b2) at (1.4,5.2) {\tiny $\bullet$};
\node (11b3) at (1.8,5.2) {\tiny $\bullet$};
\node (11b4) at (2.2,5.2) {\tiny $\bullet$};
\node (11b5) at (2.6,5.2) {\tiny $\bullet$};
\node (11b6) at (3.0,5.2) {\tiny $\bullet$};
\node (11b7) at (3.4,5.2) {\tiny $\bullet$};
\node (11b8) at (3.8,5.2) {\tiny $\bullet$};
\node (11b9) at (4.2,5.2) {\tiny $\bullet$};
\draw[-] (00b) -- node[left] {$a$} (01a);
\draw[-] (10a9) -- node[left] {$a$} (11b9);
\draw[-] (00b9) -- node[left] {$b$} (10a);
\draw[-] (01a9) -- node[left] {$b$} (11b);
\end{tikzpicture}
}

\caption{Iterated immediate snapshot for two agents $a,b$. In world 10 local variable $p_a$ is true and $p_b$ is false (a slightly simpler depiction than $p_a\overline{p}_b$), etcetera. In $\Sq \odot \IS$ and $\Sq \odot \IS \odot \IS$ it is implicit which communication graph is executed, and in $\Sq \odot \IS \odot \IS$ valuations are only indicated schematically.}
\label{fig.is}
\end{figure}

\begin{example}[Iterated Immediate Snapshot] \label{example.iis}
Consider the model $\Sq$ where $a$ knows the truth about $p_a$ and $b$ knows the truth about $p_b$. This is the interpreted system for two agents each having a single variable. We recall the immediate snapshot ($\IS$) \cite{herlihyetal:2013} for two agents $\{a,b\}$, defined as $\{\cg^{ab},\cg^{ba},U\}$, where $\cg^{ab} = I \union \{(a,b)\}$ and $\cg^{ba} = I \union \{(b,a)\}$. These three communication graphs, as points of $\IS$, are commonly denoted as \emph{schedules} consisting of \emph{concurrency classes} $a.b$, $b.a$, and $ab$, respectively. Fig.~\ref{fig.is} shows the models $\Sq$, $\Sq \odot\IS$, and $\Sq \odot \IS \odot \IS$. Lemma~\ref{lemma.below} below shows that iteration of $\IS$ preserves circularity, as in the figure.


It follows from Prop.~\ref{prop.xxxx} that $\Sq \odot\IS$ is bisimilar to $\Sq \otimes \U(\IS)$. However, $(\Sq \otimes \U(\IS)) \otimes \U(\IS)$ is not  bisimilar to $(\Sq \odot \IS) \odot \IS$ and these models therefore satisfy different formulas in comparable worlds. In view of Prop.~\ref{prop.xxxx} it is sufficient to show that $(\Sq \odot \IS) \otimes \U(\IS)$ is not  bisimilar to $(\Sq \odot \IS) \odot \IS$.

Consider the fragment

\medskip

\noindent
\begin{tikzpicture}
\node (11a) at (1,5) {$(11,\cg^{ba})$};
\node (11ab) at (3.5,5) {$(11,U)$};
\node (11b) at (6,5) {$(11,\cg^{ab})$};
\draw[-] (11a) -- node[above] {$a$} (11ab);
\draw[-] (11ab) -- node[above] {$b$} (11b);
\end{tikzpicture}

\medskip

\noindent
of model $\Sq \odot \IS$. This is the top row in Fig.~\ref{fig.is}. In the model $\Sq \odot \IS \odot \IS$ this becomes

\medskip

\noindent
\scalebox{.6}{
\begin{tikzpicture}
\node (11a-a) at (1,5) {$(11,\cg^{ba},\cg^{ba})$};
\node (11a-ab) at (4,5) {$(11,\cg^{ba},U)$};
\node (11a-b) at (7,5) {$(11,\cg^{ba},\cg^{ab})$};
\node (11ab-b) at (10,5) {$(11,U,\cg^{ab})$};
\node (11ab-ab) at (13,5) {$(11,U,U)$};
\node (11ab-a) at (16,5) {$(11,U,\cg^{ba})$};
\node (11b-a) at (19,5) {$(11,\cg^{ab},\cg^{ba})$};
\node (11b-ab) at (22,5) {$(11,\cg^{ab},U)$};
\node (11b-b) at (25,5) {$(11,\cg^{ab},\cg^{ab})$};
\draw[-] (11a-a) -- node[above] {$a$} (11a-ab);
\draw[-] (11a-ab) -- node[above] {$b$} (11a-b);
\draw[-] (11a-b) -- node[above] {$a$} (11ab-b);
\draw[-] (11ab-b) -- node[above] {$b$} (11ab-ab);
\draw[-] (11ab-ab) -- node[above] {$a$} (11ab-a);
\draw[-] (11ab-a) -- node[above] {$b$} (11b-a);
\draw[-] (11b-a) -- node[above] {$a$} (11b-ab);
\draw[-] (11b-ab) -- node[above] {$b$} (11b-b);
\end{tikzpicture}
}

\medskip

Let us now, instead, calculate $\Sq \odot \IS \otimes \U(\IS)$. Instead of $(11,\cg^{ba},\cg^{ba})\text{---}a\text{---}(11,\cg^{ba},U)$, we obtain $(11,\cg^{ba},(\cg^{ba},11))\text{---}a\text{---}(11,\cg^{ba},(U,11))$. Apart from this edge and other expected edges as above, we now obtain additional edges as below (where we also assume transitivity).

\medskip

\noindent
\scalebox{.6}{
\begin{tikzpicture}
\node (11a-a) at (1,5) {$(11,\cg^{ba},\cg^{ba})$};
\node (11a-ab) at (4,5) {$(11,\cg^{ba},U)$};
\node (11a-b) at (7,5) {$(11,\cg^{ba},\cg^{ab})$};
\node (11ab-b) at (10,5) {$(11,U,\cg^{ab})$};
\node (11ab-ab) at (13,5) {$(11,U,U)$};
\node (11ab-a) at (16,5) {$(11,U,\cg^{ba})$};
\node (11b-a) at (19,5) {$(11,\cg^{ab},\cg^{ba})$};
\node (11b-ab) at (22,5) {$(11,\cg^{ab},U)$};
\node (11b-b) at (25,5) {$(11,\cg^{ab},\cg^{ab})$};
\draw[-] (11a-a) -- node[above] {$a$} (11a-ab);
\draw[-] (11a-ab) -- node[above] {$b$} (11a-b);
\draw[-] (11a-b) -- node[above] {$a$} (11ab-b);
\draw[-] (11ab-b) -- node[above] {$b$} (11ab-ab);
\draw[-] (11ab-ab) -- node[above] {$a$} (11ab-a);
\draw[-] (11ab-a) -- node[above] {$b$} (11b-a);
\draw[-] (11b-a) -- node[above] {$a$} (11b-ab);
\draw[-] (11b-ab) -- node[above] {$b$} (11b-b);
\draw[-,bend left] (11a-ab) to node[above] {$a$} (11ab-ab);
\draw[-,bend left] (11a-a) to node[above] {$a$} (11ab-a);
\draw[-,bend left] (11ab-b) to node[above] {$b$} (11b-b);
\draw[-,bend left] (11ab-ab) to node[above] {$b$} (11b-ab);
\end{tikzpicture}
}

\medskip

For example, $(11, \cg^{ba}, (U,11)) \sim_a (11, U, (U,11))$, because by the semantics of action model execution, $(11,\cg^{ba}) \sim_a (11,U)$ in $\Sq \odot \IS$ and $(U,11) \sim_a (U,11)$ in $\U(\IS)$. Similarly, $(11, \cg^{ba}, (\cg^{ba},11)) \sim_a (11, U, (U,11))$, because $(11,\cg^{ba}) \sim_a (11,U)$ in $\Sq \odot \IS$ and $(\cg^{ba},11) \sim_a (U,11)$ in $\U(\IS)$, where the latter holds because $\cg^{ba} a = U a$ (namely $\{a,b\}$) and $11_{\cg^{ba} a} = 11_{U a}$ (namely $11_{\{ab\}}$, which is $11$).

Intuitively, in $\Sq \odot \IS \odot \IS$ the agents learn in the second round whether the communication succeeded in the previous, first, round. But in $\Sq \odot \IS \otimes \U(\IS)$ they do not learn this in the second round.

It is easy to see that $\Sq \odot \IS \odot \IS$ is not bisimilar to $\Sq \odot \IS \otimes \U(\IS)$ wherein we can reach states in the model with a different valuation in fewer steps. There are then distinguishing formulas, e.g., $\Sq \odot \IS \odot \IS, (11,U,\cg^{ba}) \not\models\M_a\M_b \neg p_a$, whereas $\Sq \odot \IS \otimes \U(\IS), (11,U,(\cg^{ba},11)) \models\M_a\M_b \neg p_a$.
\end{example}

\paragraph*{On squares and circles}  

A \emph{circular $ab$-chain} is an epistemic model consisting of an even number of worlds $0,\dots,2n-1$, where $n \in \Naturals$ with $n \geq 2$, and such that for all $i \leq n$, $2i \sim_a 2i+1$ and $2i \sim_b 2i-1$ (modulo $2n$).

\begin{lemma} \label{lemma.below}
Define $\Sq \odot \IS^0 := \Sq$ and $\Sq \odot \IS^{n+1} := (\Sq\odot \IS^n)\odot \IS$. For all $n \in \Naturals$, $\Sq \odot \IS^n$ is a circular $ab$-chain.
\end{lemma}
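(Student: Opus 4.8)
The plan is to argue by induction on $n$. For the base case $n=0$ we have $\Sq \odot \IS^0 = \Sq$, and relabelling its worlds $00,01,11,10$ as $0,1,2,3$ turns the defining edges $00\sim_a 01$, $10\sim_a 11$, $00\sim_b 10$, $01\sim_b 11$ into $0\sim_a 1$, $2\sim_a 3$, $0\sim_b 3$, $2\sim_b 1$, exhibiting $\Sq$ as a circular $ab$-chain on $4$ worlds (size parameter $2\geq 2$). For the inductive step I assume $M := \Sq \odot \IS^n$ is a circular $ab$-chain on worlds $0,\dots,2m-1$ with $m\geq 2$ and compute $M\odot\IS$, whose domain $\{0,\dots,2m-1\}\times\{\cg^{ab},\cg^{ba},U\}$ has $6m$ worlds.

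First I would record the receiver sets $\cg^{ab}a=\{a\}$, $\cg^{ba}a=Ua=\{a,b\}$, $\cg^{ba}b=\{b\}$, and $\cg^{ab}b=Ub=\{a,b\}$. The one place the induction hypothesis is genuinely used is the observation that on a circular $ab$-chain with $m\geq 2$ the $a$-partner and $b$-partner of a world differ, so no two distinct worlds are both $\sim_a$- and $\sim_b$-related and hence $\sim_{\{a,b\}}$ relates each world only to itself. Substituting these facts into the definition of $\dot\sim$ collapses its clauses to exactly four families of nontrivial edges: the inherited edges $(2i,\cg^{ab})\dot\sim_a(2i{+}1,\cg^{ab})$ and $(2i,\cg^{ba})\dot\sim_b(2i{-}1,\cg^{ba})$ (indices modulo $2m$) coming from the $a$- and $b$-edges of $M$, and the gluing edges $(w,\cg^{ba})\dot\sim_a(w,U)$ and $(w,\cg^{ab})\dot\sim_b(w,U)$ forced by the triviality of $\sim_{\{a,b\}}$.

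From these four families I would read off that every world of $M\odot\IS$ has exactly one $\dot\sim_a$-neighbour and exactly one $\dot\sim_b$-neighbour, both distinct from itself: each $U$-copy is joined by $a$ to the $\cg^{ba}$-copy and by $b$ to the $\cg^{ab}$-copy of the same world, while the $\cg^{ab}$- and $\cg^{ba}$-copies carry the $a$- and $b$-edges of $M$ respectively. Hence $M\odot\IS$ is a $2$-regular graph whose edges are properly $\{a,b\}$-coloured, that is, a disjoint union of cycles alternating between $a$- and $b$-edges, and it only remains to check that it is a single cycle on all $6m$ worlds. I would verify this by tracing the alternating walk from $(0,U)$, whose $k$-th period consists of the six edges
\[ (2k,U)\sim_b(2k,\cg^{ab})\sim_a(2k{+}1,\cg^{ab})\sim_b(2k{+}1,U)\sim_a(2k{+}1,\cg^{ba})\sim_b(2k{+}2,\cg^{ba})\sim_a(2k{+}2,U), \]
which visits six fresh worlds and advances the $U$-index by two; letting $k$ run from $0$ to $m-1$ therefore returns to $(0,U)$ after $6m$ edges and exhausts every $\cg^{ab}$-, $\cg^{ba}$- and $U$-copy exactly once. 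Relabelling this single cycle yields a circular $ab$-chain on $6m=2(3m)$ worlds with $3m\geq 2$, completing the induction.

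The step I expect to be the main obstacle is the last, global one: the local edge analysis only delivers a disjoint union of alternating cycles, and one must confirm that the walk from $(0,U)$ neither closes up prematurely nor omits any world. The care there lies in the index arithmetic modulo $2m$ together with the cyclic passage through the three communication-graph layers; the triviality of $\sim_{\{a,b\}}$, by contrast, is the only genuinely model-specific input, and I would isolate it as a small sublemma so that the rest of the inductive step reduces to this transparent neighbour-counting and cycle-tracing.
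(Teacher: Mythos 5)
Your proof is correct and follows essentially the same route as the paper's: induction on $n$, with the inductive step classifying the edges of $(\Sq\odot\IS^n)\odot\IS$ via the receiver sets $\cg^{ab}a=\{a\}$, $\cg^{ba}b=\{b\}$, $\cg^{ba}a=Ua=\cg^{ab}b=Ub=\{a,b\}$ and the triviality of $\sim_{\{a,b\}}$ on a chain with at least four worlds, which yields exactly the paper's four edge families together with their uniqueness. The one point where you go beyond the paper is the explicit alternating walk from $(0,U)$ certifying that the resulting $2$-regular properly coloured graph is a single cycle on all $6m$ worlds rather than a disjoint union of cycles; the paper's proof stops at the local degree analysis and leaves this global connectivity step implicit (appealing informally to the subdivision picture afterwards), so your extra care there is a genuine, if small, strengthening.
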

\begin{proof} We prove this by induction.

Model $\Sq$ is a (minimal) circular $ab$-chain.

Assuming that $\Sq \odot \IS^n$ is a circular $ab$-chain, take any world $w$ in that chain and let neighbouring worlds $w',w''$ be such that $w'  \sim_a w$ and $w \sim_b w''$ (where $w,w',w''$ have arbitrary valuation). We now execute $\IS$ once more. Consider the new worlds $(w,\cg^{ab}), (w,U), (w,\cg^{ba})$. Then:
\begin{itemize}
\item $(w',\cg^{ab}) \sim_a (w,\cg^{ab})$ because $\cg^{ab}a = \cg^{ab}a \ (=\{a\})$ and $w' \sim_a w$. No other world than $(w',\cg^{ab})$ is indistinguishable for $a$ from $(w,\cg^{ab})$. If $\cg \neq \cg^{ab}$ then $\cg a \neq \cg^{ab} a$ so $(w',\cg) \not\sim_a (w,\cg^{ab})$. If $v \neq w,w'$ then $v \not\sim_a w$ so $(v,\cg^{ab}) \not\sim_a (w,\cg^{ab})$.
\item $(w,\cg^{ba}) \sim_b (w'',\cg^{ba})$ because $\cg^{ba}b = \cg^{ba}b \ (=\{b\})$ and $w \sim_b w''$. Similarly to the previous case this is the unique indistinguishable other world in the updated model.
\item $(w,\cg^{ab}) \sim_b (w,U)$ because $\cg^{ab} b = U b \ (= \{a,b\})$ and $w \sim_{ab} w$. No other world than $(w,U)$ is indistinguishable for $b$ from $(w,\cg^{ab})$. We note that $\cg^{ba} b \neq \cg^{ab} b$ and $\cg^{ba} b \neq U b$, so  $(w,\cg^{ab}) \not\sim_b (w,\cg^{ba})$ and $(w,U) \not\sim_b (w,\cg^{ba})$. If $v \neq w$ then $v \not\sim_{ab} w$ so $(w,\cg^{ab}) \not\sim_b (v,\cg^{ab})$ and $(w,U) \not\sim_b (v,U)$.
\item $(w,\cg^{ba}) \sim_a (w,U)$ because $\cg^{ba} a = U a \ (= \{a,b\})$ and $w \sim_{ab} w$. Similarly to the previous case this is the unique indistinguishable other world in the updated model.
\end{itemize}
\end{proof}

This result is not surprising. In the corresponding representation as simplicial complexes, an application of $\IS$ is a so-called \emph{subdivision} \cite{herlihyetal:2013}. A circular $ab$-chain corresponds to a circular graph (1-dimensional complex) with alternating $a$ and $b$ nodes, such that each edge $a\text{---}b$ gets replaced by three edges $a\text{---}b\text{---}a\text{---}b$ at each iteration of $\IS$ (and duplicated nodes keeps their old labels).

\section{Communication patterns and action models are incomparable} \label{sect:inc}

\begin{proposition} \label{prop.prop} Communication pattern logic is not at least as update expressive as action model logic.
\end{proposition}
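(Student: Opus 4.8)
The plan is to exploit the one structural feature that tells the two update mechanisms apart: a communication pattern can always be executed, whereas an action model need not be. Concretely, I would take the action model $\U = (\{e\},\sim,\pre)$ with $\sim_a = \{(e,e)\}$ for every $a \in A$ and $\pre(e) = p_a$ for some fixed $p_a \in P$; the pointed modality $[\U,e]$ is the truthful public announcement of $p_a$, and it is a genuine update modality of $\lang^\times$. The induced update relation $X$ relates $(M,w)$ to $(M \otimes \U,(w,e))$ precisely when $M,w \models p_a$, so $X$ is a \emph{partial} relation: at any pointed model refuting $p_a$ its image is empty. The goal is to show that no update modality of $\lang^\circ$ can reproduce this behaviour.

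First I would record the key invariant of communication-pattern updates, read straight off the definition of $M \odot \cp$: for every epistemic model $M=(W,\sim,L)$ and every communication pattern $\cp$, the updated domain is $W \times \cp$, which is nonempty whenever $W$ and $\cp$ are. Hence every communication-pattern update $Y$ — whether given by a single point $(\cp,\cg)$ or by a (multi-)pointed $(\cp,T)$ with $T$ nonempty, i.e., even allowing the nondeterministic unions discussed in the introduction — is a \emph{total} relation whose image at every pointed model is nonempty.

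Then I would derive the contradiction. Suppose some communication-pattern update $Y$ were update equivalent to $X$. Take the one-world interpreted system $M = (\{w\},\sim,L)$ with $\sim_a = \{(w,w)\}$ and $L(w)=\emptyset$, so that $M,w \not\models p_a$. On the action-model side the image of $(M,w)$ under $X$ is empty, since $\pre(e)$ fails at $w$. On the communication-pattern side the image of $(M,w)$ under $Y$ contains the genuine pointed model $(M\odot\cp,(w,\cg))$. A nonempty pointed epistemic model cannot be collectively bisimilar to an empty update (there is nothing for a bisimulation to relate its point to), so the updates of $(M,w)$ with $X$ and with $Y$ are not bisimilar, contradicting update equivalence. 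As $X$ is an update modality of $\lang^\times$ admitting no update-equivalent modality of $\lang^\circ$, this shows $\lang^\circ$ is not at least as update expressive as $\lang^\times$.

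The substantive point — and the only place where care is needed — is the treatment of partiality in the definition of update equivalence: the argument hinges on the fact that update equivalence is quantified over \emph{all} pointed models, so a total communication-pattern update is forced to match the public announcement even on worlds where the announcement is not executable, which it cannot. I expect this to be the main (if modest) obstacle, essentially the ``trivial difference'' flagged in the introduction; everything else is immediate from the definitions of $\odot$ and $\otimes$, and no appeal to modal depth or to Proposition~\ref{prop.xxxx} is needed.
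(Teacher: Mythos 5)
Your argument is correct as a proof of the literal statement, but it takes a genuinely different --- and deliberately weaker --- route than the paper. You separate the two mechanisms via executability: the single-pointed truthful announcement of $p_a$ induces a \emph{partial} update relation, whereas every communication pattern update is total, and an empty image cannot be bisimilar to a nonempty one. This is sound under the paper's definitions, but it is precisely the ``trivial difference'' that the introduction flags and then sets aside (``It becomes non-trivial if we also consider union of relations\dots''). The paper's own proof avoids this loophole entirely: all three of its arguments use updates that are always executable --- the announcement of \emph{whether} $p_a \vel p_b$ (a multi-pointed, hence total, action model revealing information no agent could communicate from its local variables), the revelation of some but not all of an agent's local variables, and, most substantively, a counting argument showing that the always-executable action model reproducing $\Sq \odot \IS \odot \IS$ (36 worlds, bisimulation minimal) cannot be matched by any communication pattern over two agents, since there are only four communication graphs and hence at most $16$ worlds in any single update of $\Sq$. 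What the paper's route buys is robustness and informativeness: the separation persists even when restricted to total (multi-pointed) updates, which is where the real content lies, and the third argument moreover exhibits the failure of closure under composition for communication patterns. Your route buys brevity, but it establishes only the separation the authors explicitly describe as expected and not of independent interest; to match the paper you would need at least one example in the spirit of its first or third argument.
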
 

\begin{proof}
We can prove this in different ways, which seems instructive.

First, in a public announcement, the environment may reveal something that cannot be revealed by the agents individually or jointly, such as the announcement whether $p_a \vel p_b$ in a model where $a$ knows whether $p_a$ and $b$ knows whether $p_b$.

\medskip

\begin{tikzpicture}
\node (00) at (0,0) {$\overline{p}_a\overline{p_b}$};
\node (10) at (2,0) {$p_a\overline{p_b}$};
\node (01) at (0,2) {$\overline{p}_ap_b$};
\node (11) at (2,2) {$p_ap_b$};
\draw[-] (00) -- node[above] {$a$} (10);
\draw[-] (01) -- node[above] {$a$} (11);
\draw[-] (00) -- node[left] {$b$} (01);
\draw[-] (10) -- node[right] {$b$} (11);
\node (imp) at (4,1) {$\stackrel{p_a \vel p_b?}{\Imp}$};
\node (00r) at (6,0) {$\overline{p}_a\overline{p_b}$};
\node (10r) at (8,0) {$p_a\overline{p_b}$};
\node (01r) at (6,2) {$\overline{p}_ap_b$};
\node (11r) at (8,2) {$p_ap_b$};
\draw[-] (01r) -- node[above] {$a$} (11r);
\draw[-] (10r) -- node[right] {$b$} (11r);
\end{tikzpicture}

\medskip

Second, agents may choose to reveal some but not all of their local variables, such as, if $a$ knows whether $p_a$ and whether $q_a$, $a$ informing $b$ of the truth about $p_a$ but not about $q_a$.

\begin{tikzpicture}
\node (00) at (0,0) {$\overline{p}_a\overline{q_a}$};
\node (10) at (2,0) {$p_a\overline{q_a}$};
\node (01) at (0,2) {$\overline{p}_aq_a$};
\node (11) at (2,2) {$p_aq_a$};
\draw[-] (00) -- node[above] {$b$} (10);
\draw[-] (01) -- node[above] {$b$} (11);
\draw[-] (00) -- node[left] {$b$} (01);
\draw[-] (10) -- node[right] {$b$} (11);
\node (imp) at (4,1) {$\stackrel{p_a?}{\Imp}$};
\node (00r) at (6,0) {$\overline{p}_a\overline{q_a}$};
\node (10r) at (8,0) {$p_a\overline{q_a}$};
\node (01r) at (6,2) {$\overline{p}_aq_a$};
\node (11r) at (8,2) {$p_aq_a$};
\draw[-] (00r) -- node[left] {$b$} (01r);
\draw[-] (10r) -- node[right] {$b$} (11r);
\end{tikzpicture}

\medskip

Third, there are action models that produce more uncertainty than any communication pattern. Here we should note that although the composition of two action models is again an action model (therefore, for all $\U,\U'$ there is a $\U''$, namely the composition of $\U$ and $\U'$, such that $[\U][\U']\phi \eq [\U'']\phi$), sequentially executing two communication patterns is typically not the same as executing a single communication pattern (it is not the case that for all $\cp,\cp'$ there is a $\cp''$ such that $[\cp][\cp']\phi \eq [\cp'']\phi$). For example, consider the models $\Sq$ and $\Sq \odot \IS \odot \IS$ (Example~\ref{example.iis}). The domain of model $\Sq$ consists of four worlds and that of $\Sq \odot \IS \odot \IS$ consists of $36$ worlds; it is nine times larger (and it is bisimulation minimal). Now there are only four different communication patterns for two agents (namely $I$, $\cg^{ba}$, $\cg^{ab}$, and $U$). So the maximum size of a model resulting from updating $\Sq$ with a communication pattern is $16$. Therefore there is no such communication pattern. In other words, there is no $\cp$ such that $\Sq \odot \IS \odot \IS$ is bisimilar to $\Sq \odot \cp$ which implies that there is no $\cp$ that has the same update effect as updating twice with $\IS$.

However, there is an action model $\U$ such that $\Sq \odot \IS \odot \IS$ is bisimilar to $\Sq \otimes \U$: its domain is the domain of $\Sq \odot \IS \odot \IS$; its relations for $a$ and $b$ are the relations for $a$ and $b$ on the model $\Sq \odot \IS \odot \IS$, and its preconditions are such that the precondition of a world $(ij,\cg,\cg')$ in the domain of $\Sq \odot \IS \odot \IS$ is the description $\delta_{ij}$ of the valuation $ij$. It is straightforward to see that $\Sq \odot \IS \odot \IS$ is even isomorphic to $\Sq \otimes \U$.

We conclude that there is no communication pattern that is update equivalent to this action model $\U$. Therefore, communication pattern logic is not at least as update expressive as action model logic.
\end{proof}

We continue by showing that action model logic is not at least as update expressive as communication pattern logic. If multi-pointed action models had not been allowed, a trivial way to show that, would have been to observe that single-pointed action models unlike communication patterns may not always be executable. Although true, that is not of interest. We prove this in a more meaningful way in the following Prop.~\ref{prop.ppppp}. Its proof assumes towards a contradiction that an action model $\U$ exists that is update equivalent to the communication pattern $\IS$, where we identify $\U$ with the multi-pointed action model $(\U,\domain(\U))$. We then compare the updates $\IS$ and $\U$ in epistemic model $\Sq \odot \IS^n$ for $n$ exceeding a function of the modal depth of any precondition of $\U$, and derive a contradiction. It may assist the reader to know that Ex.~\ref{example.iis} above replays this proof for $\U = \U(\IS)$ of which the action preconditions are booleans, such that $md(\U)=0$ and we can choose $n=1$.

\begin{proposition} \label{prop.ppppp}
Action model logic is not at least as update expressive as communication pattern logic.
\end{proposition}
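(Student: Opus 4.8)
The plan is to argue by contradiction. Suppose some action model $\U$, identified with the multi-pointed $(\U,\domain(\U))$, is update equivalent to $\IS$, and let $d = md(\U)$ be the maximal modal depth of its preconditions. I would instantiate update equivalence at the model $M = \Sq \odot \IS^n$, for a value of $n$ to be fixed large relative to $d$. Update equivalence then gives $\Sq \odot \IS^{n+1} = M \odot \IS \bisim M \otimes \U$, and the whole proof consists in showing that these two models cannot be bisimilar. By Lemma~\ref{lemma.below}, $\Sq \odot \IS^{m}$ is a circular $ab$-chain; a direct induction (each application of $\IS$ triples the chain, as in Fig.~\ref{fig.is}) shows that its domain has $4 \cdot 3^{m}$ worlds, split into four \emph{monochromatic arcs}, one per valuation in $\{00,01,10,11\}$, each of length $3^{m}$.

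The bisimulation invariant I would use is \emph{distance to the nearest differently-valued world}. Define formulas $\beta_0 := p_a \et p_b$ and $\beta_{j+1} := (p_a \et p_b) \et K_a \beta_j \et K_b \beta_j$; then $\beta_k$ has modal depth $k$ and holds exactly at those $11$-worlds from which every world reachable in at most $k$ steps along $\sim_a$ or $\sim_b$ is again $11$-valued, i.e.\ exactly where the distance to a non-$11$ world exceeds $k$. In $\Sq \odot \IS^{n+1}$ the $11$-arc has length $3^{n+1}$ and both its endpoints are one step from a non-$11$ world, so a world $w^\star$ in the middle of that arc is at distance $\ge 3^{n+1}/2 = \tfrac32\cdot 3^n$ from every non-$11$ world; hence $\Sq \odot \IS^{n+1}, w^\star \models \beta_{3^n}$. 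As $\beta_{3^n}$ is an ordinary epistemic formula, it is preserved under bisimulation, so any model bisimilar to $\Sq \odot \IS^{n+1}$ must contain a world satisfying $\beta_{3^n}$.

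It therefore suffices to show that $\beta_{3^n}$ fails at \emph{every} world of $M \otimes \U$. A world of $M \otimes \U$ has the form $(v,f)$ with $v$ lying in one of the length-$3^n$ arcs of $M = \Sq \odot \IS^n$. From such a $v$ there is a base walk of length at most $3^n/2 + 1$ to a non-$11$ world (walk to the nearer endpoint of its arc, then cross). The key claim is that this short base walk lifts to a path of the same length in $M \otimes \U$: given a base edge $v \sim_a v'$ and an action $f$ with $M,v \models \pre(f)$, there is an action $g$ with $f \sim_a g$ and $M,v' \models \pre(g)$, so that $(v,f) \sim^\times_a (v',g)$ in the product. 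Iterating, every $11$-world of $M \otimes \U$ has a non-$11$ world within $3^n/2 + 1 \le 3^n$ steps, whence $M \otimes \U \models \neg\beta_{3^n}$. Combined with the previous paragraph and the assumed bisimilarity, this is the desired contradiction, valid as soon as $n$ is large enough that the lifting claim applies.

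The main obstacle is exactly this lifting claim, i.e.\ the step lemma that $\U$ always offers, across a base edge, a $\sim_a$-linked action whose precondition still holds at the neighbour. This is the point at which the bounded precondition depth $d$ is indispensable: because each monochromatic arc of $M$ has length $3^n \gg d$, the action model $\U$ cannot tell apart worlds deep inside an arc, so (just like the ``bent'' shortcut edges computed explicitly in Ex.~\ref{example.iis}, which is this argument for $d=0$, $n=1$) it cannot stretch a length-$3^n$ arc of $M$ into the length-$3^{n+1}$ arc of $\Sq \odot \IS^{n+1}$. I expect to secure the step lemma either directly, by instantiating update equivalence at small interpreted systems to read off the relevant closure property of $\U$'s actions and preconditions, or, should the lifting be awkward to control near arc endpoints, by replacing the distance invariant with a bounded-bisimulation type count: bounded precondition depth forces the deep interior of each arc of $M \otimes \U$ to realise only $O(|\domain(\U)|)$ bisimulation types, whereas $\Sq \odot \IS^{n+1}$ realises $\Theta(3^{n+1})$ of them, again contradicting bisimilarity once $n$ is large.
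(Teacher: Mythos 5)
Your overall architecture is the same as the paper's: assume $\U$ is update equivalent to $\IS$, run both updates on the circular $ab$-chain $\Sq\odot\IS^n$ for $n$ large relative to $md(\U)$, and derive a contradiction from the fact that the distance to the nearest differently-valued world is a bisimulation invariant (your $\beta_k$ formulas are a clean formalization of the paper's informal ``the additional $a$-links and $b$-links allow shorter paths to a $01$-world''). The genuine gap is exactly the one you flag: the lifting (step) lemma. As you state it --- for every base edge $v\sim_a v'$ and every $f$ executable at $v$ there is some $g\sim_a f$ executable at $v'$ --- it is \emph{false}, even for the induced action model $\U(\IS)$ of Ex.~\ref{example.iis} that you cite as your $d=0$ model case. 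Take the $a$-edge of $\Sq\odot\IS^n$ joining the end of the $11$-arc to the $10$-arc, and $f=(\cg^{ba},\{p_a,p_b\})$: since $\cg^{ba}a=Ua=\{a,b\}$, the entire $\sim_a$-class of $f$ consists of actions with precondition $\delta_{\{p_a,p_b\}}$, none of which is executable at a $10$-world, so $(v,f)$ has no $a$-successor over $v'$ at all. Walks therefore do not lift step-for-step; at arc boundaries one must first ``switch actions'' within the fibre over $11$-worlds before crossing, and controlling the cost of that detour requires knowing the $\sim_a$/$\sim_b$-structure of $\U$'s actions and preconditions.

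The paper supplies exactly that missing structural information before doing any distance counting --- this is your first proposed repair made concrete. From update equivalence it first derives that the preconditions of $\U$ jointly exhaust all models (otherwise $\IS$ but not $\U$ would be executable somewhere), and that every action has $\sim_a$- and $\sim_b$-related actions with the \emph{same} precondition, so that $\U$ is a refinement of the three-point chain $\cg^{ab}\text{---}b\text{---}U\text{---}a\text{---}\cg^{ba}$ with actions of shape $(\cg,\theta)$. It then needs executability of a fixed triple $e\sim_a f\sim_b g$ only at the three worlds $(11,U^{n-1}\cg^{ba})$, $(11,U^n)$, $(11,U^{n-1}\cg^{ab})$, which are $(md(\U)+1)$-bisimilar by the choice of $n$ and hence all satisfy $\theta$; this already yields shortcut edges such as $(11,U^{n-1}\cg^{ba},(U,\theta))\sim_a(11,U^n,(U,\theta))$ that cannot exist in the circular chain $\Sq\odot\IS^{n+1}$, so the contradiction is anchored locally at the midpoint rather than requiring a global upper bound on the distance from \emph{every} world of $\Sq\odot\IS^n\otimes\U$ to a non-$11$ world. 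If you want to keep your global distance bound, you must prove a corrected step lemma (lifting up to a bounded number of extra in-fibre moves), and for that you will need the paper's preliminary analysis of $\U$ anyway; the same holds for your fallback via counting bounded-bisimulation types.
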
 

\begin{proof}
Suppose towards a contradiction that communication pattern $\IS$ is update equivalent to an action model $\U = (E,\sim,\pre)$. 

What do we know about $\U$? As $\IS$ is always executable, we may assume that the disjunction $\psi$ of all preconditions of actions $e$ in the domain $E$ of $\U$ is the triviality. Otherwise, given some model with $M,w \models\neg\psi$, we could update with $\IS$ but not with $\U$. Similarly, for any action $e$ in the domain $E$ of $\U$, there must be $f \in E$ such that $e \sim_a f$ and $\pre(e)=\pre(f)$ (and for agent $b$ there must be a $g \in E$ such that $g \sim_b f$ and $\pre(g)=\pre(f)$). Otherwise, consider a model $(M,w)$ that can only be updated with $(\U,e)$ (for which $M,w \models \pre(e)$). It can be updated with $(\IS,U)$ and also with $(\IS,\cg^{ba})$ resulting in states $(w,U)$ and $(w,\cg^{ba})$ satisfying different properties, as $(w,U)\sim_a(w,\cg^{ba})$ (because $U a = \cg^{ba} = \{a,b\}$), so that one or the other but not both can be bisimilar to $(w,e)$. Therefore, $\U$ must be a refinement of $\IS$ seen as a structure $\cg^{ab}\text{---}b\text{---}U\text{---}a\text{---}\cg^{ba}$. Its actions can therefore be assumed to have shape $(\cg,\phi)$ where $\cg$ is one of $\cg^{ab},U,\cg^{ba}$ and where $\phi\in\lang^\times$ is the precondition of that action, that is, $\pre(\cg,\phi) = \phi$.\footnote{By \emph{refinement} we mean that $\cg^{ab}$ can be seen as an equivalence class $\{ (\cg^{ab}, \phi) \mid (\cg^{ab}, \phi) \in \domain(\U) \}$, and similarly for $U$ and $\cg^{ba}$, where two such equivalence classes are indistinguishable for $a$ if there are $(\cg,\phi), (\cg',\phi')$ such that $(\cg,\phi)\sim_a (\cg',\phi')$, and similarly for $b$.}

The modality $[\U]$ is an operator in the language $\lang^\times$ and $|E|$ is finite, so that $md(\U) = \max \{md(\pre(e)) \mid e \in E\}$ is defined. Choose $n \in \Naturals$ with $n > \log_3 2(md(\U)+1)$ and consider $\Sq \odot \IS^n$, schematically depicted as:

\medskip

\scalebox{.75}{
\begin{tikzpicture}
\node (m) at (-2,0) {$\Sq \odot \IS^n$:};
\node (00b) at (1,0) {$00$};
\node (00a) at (4,0) {$00$};
\node (01b) at (0,1) {$01$};
\node (01a) at (0,4) {$01$};
\node (10a) at (5,1) {$10$};
\node (10b) at (5,4) {$10$};
\node (11a) at (1,5) {$11$};
\node (11b) at (4,5) {$11$};
\node (11m) at (2.5,5) {$\bullet$};
\node (11mb) at (2.5,4.5) {\small $(11,U^n)$};
\draw[-] (00b) -- node[left] {$a$} (01b);
\draw[-] (10b) -- node[left] {$a$} (11b);
\draw[-] (00a) -- node[left] {$b$} (10a);
\draw[-] (01a) -- node[left] {$b$} (11a);
\draw[-,dashed] (00a) -- (00b);
\draw[-,dashed] (10a) -- (10b);
\draw[-,dashed] (01a) -- (01b);
\draw[-,dashed] (11a) -- (11b);
\end{tikzpicture}
}

\medskip

\noindent
and concretely its three-action fragment:

\medskip

\noindent
\begin{tikzpicture}
\node (star) at (-1,5) {$(*):$};
\node (11a) at (1,5) {$(11,U^{n-1}\cg^{ba})$};
\node (11ab) at (4,5) {$(11,U^n)$};
\node (11b) at (7,5) {$(11,U^{n-1}\cg^{ab})$};
\draw[-] (11a) -- node[above] {$a$} (11ab);
\draw[-] (11ab) -- node[above] {$b$} (11b);
\end{tikzpicture}

\medskip

\noindent 
where world $(11,U^n)$ of $(*)$ is the same as the depicted world $(11,U^n)$ of $\Sq \odot \IS^n$. 

We can now justify the bound $n > \log_3 2(md(\U)+1)$. We need in the proof that the three worlds of $(*)$ satisfy the same actions of $\U$, and we guarantee that because they are bounded collectively bisimilar for an appropriate bound. Given $(11,U^n)$, the bound should exceed the modal depth of any possible precondition of any action in $\U$. That explains $md(\U)$. Plus one, as we need this to hold for the surrounding worlds too. That explains $md(\U)+1$. Twice that, $2 \cdot (md(\U)+1)$, is the required length of one side of the squarish model $\Sq \odot \IS^n$ with therefore $8 \cdot (md(\U)+1)$ worlds. Starting with four worlds, every iteration of $\IS$ multiplies the number of worlds by $3$. So we therefore want to iterate $\IS$ by some $n$ such that $4 \cdot 3^n >  8 \cdot (md(\U)+1)$, that is, $n > \log_3 2(md(\U)+1)$.

Consider $\Sq \odot \IS^n \otimes \U$.
Recalling what is known about $\U$, there must be an $e \in E$ such that $\Sq \odot \IS^n, (11,U^n) \models \pre(e)$. Also, there must be $f,g \in E$ with $e \sim_a f$ and $f \sim_b g$ and $\pre(e)=\pre(f)= \pre(e)$. Let $\pre(e)$ be $\theta$. These actions $e,f,g$ therefore have shape $(\cg^{ab},\theta)$, $(U,\theta)$, $(\cg^{ba},\theta)$  respectively. 

As $n > \log_3 2(md(\U)+1)$, the three worlds in $(*)$ are bounded collectively bisimilar: \[ (\Sq \odot \IS^n, (11,U^{n-1},\cg^{ba})) \bisim^{md(\U)+1} (\Sq \odot \IS^n, (11,U^n)) \bisim^{md(\U)+1} (\Sq \odot \IS^n, (11,U^{n-1},\cg^{ab})) \] As $md(\theta) \leq md(\U)$, all three worlds in $(*)$ satisfy $\theta$, so actions $e,f,g$ can be executed in all these worlds.

The model $\Sq \odot \IS^n \otimes \U$ therefore contains the submodel

\medskip

\noindent
\scalebox{.6}{
\begin{tikzpicture}
\node (11a-a) at (1,5) {$(\cdot,(\cg^{ba},\theta))$};
\node (11a-ab) at (4,5) {$(\cdot,(U,\theta))$};
\node (11a-b) at (7,5) {$(\cdot,(\cg^{ab},\theta))$};
\node (11ab-b) at (10,5) {$(\cdot,(\cg^{ab},\theta))$};
\node (11ab-ab) at (13,5) {$(\cdot,(U,\theta))$};
\node (11ab-a) at (16,5) {$(\cdot,(\cg^{ba},\theta))$};
\node (11b-a) at (19,5) {$(\cdot,(\cg^{ba},\theta))$};
\node (11b-ab) at (22,5) {$(\cdot,(U,\theta))$};
\node (11b-b) at (25,5) {$(\cdot,(\cg^{ab},\theta))$};
\draw[-] (11a-a) -- node[above] {$a$} (11a-ab);
\draw[-] (11a-ab) -- node[above] {$b$} (11a-b);
\draw[-] (11a-b) -- node[above] {$a$} (11ab-b);
\draw[-] (11ab-b) -- node[above] {$b$} (11ab-ab);
\draw[-] (11ab-ab) -- node[above] {$a$} (11ab-a);
\draw[-] (11ab-a) -- node[above] {$b$} (11b-a);
\draw[-] (11b-a) -- node[above] {$a$} (11b-ab);
\draw[-] (11b-ab) -- node[above] {$b$} (11b-b);
\draw[-,bend left] (11a-ab) to node[above] {$a$} (11ab-ab);
\draw[-,bend left] (11a-a) to node[above] {$a$} (11ab-a);
\draw[-,bend left] (11ab-b) to node[above] {$b$} (11b-b);
\draw[-,bend left] (11ab-ab) to node[above] {$b$} (11b-ab);
\end{tikzpicture}
}

\medskip

\noindent 
wherein only some additional pairs for $\sim_a$ and $\sim_b$ are shown, and where from those shown we merely justify one as an example: for the leftmost and the middle worlds, we have that $(11, U^{n-1},\cg^{ba}, (\cg^{ba},\theta)) \sim_a (11, U^n, (U,\theta))$, because by the semantics of action model execution, $(11,U^{n-1},\cg^{ba}) \sim_a (11,U^n)$ in $\Sq \odot \IS^n$ and $(\cg^{ba},\theta) \sim_a (U,\theta)$ in $\U(\IS)$. Furthermore (unlike in Example~\ref{example.iis}), worlds $(\dots,(\cg,\theta))$ shown, may be indistinguishable for $a$ or $b$ from worlds $(\dots,(\cg,\xi))$ not shown, for actions $(\cg,\xi)$ with $\xi$ non-equivalent to $\theta$.

Consequently, $\Sq \odot \IS^n \otimes \U$ is not a circular $ab$-chain like $\Sq \odot \IS^{n+1}$ that locally looks like:

\medskip

\noindent
\scalebox{.6}{
\begin{tikzpicture}
\node (11a-a) at (1,5) {$(\cdot,\cg^{ba})$};
\node (11a-ab) at (4,5) {$(\cdot,U)$};
\node (11a-b) at (7,5) {$(\cdot,\cg^{ab})$};
\node (11ab-b) at (10,5) {$(\cdot,\cg^{ab})$};
\node (11ab-ab) at (13,5) {$(\cdot,U)$};
\node (11ab-a) at (16,5) {$(\cdot,\cg^{ba})$};
\node (11b-a) at (19,5) {$(\cdot,\cg^{ba})$};
\node (11b-ab) at (22,5) {$(\cdot,U)$};
\node (11b-b) at (25,5) {$(\cdot,\cg^{ab})$};
\draw[-] (11a-a) -- node[above] {$a$} (11a-ab);
\draw[-] (11a-ab) -- node[above] {$b$} (11a-b);
\draw[-] (11a-b) -- node[above] {$a$} (11ab-b);
\draw[-] (11ab-b) -- node[above] {$b$} (11ab-ab);
\draw[-] (11ab-ab) -- node[above] {$a$} (11ab-a);
\draw[-] (11ab-a) -- node[above] {$b$} (11b-a);
\draw[-] (11b-a) -- node[above] {$a$} (11b-ab);
\draw[-] (11b-ab) -- node[above] {$b$} (11b-b);
\end{tikzpicture}
}

\medskip

Now the assumption of update equivalence implies that $\Sq \odot \IS^{n+1}$ is collectively bisimilar to $\Sq \odot \IS^n \otimes \U$. The supposed bisimulation relation $Z$ linking $\Sq \odot \IS^{n+1}$ and $\Sq \odot \IS^n \otimes \U$ should therefore such that $Z: (w,\sigma,\cg) \mapsto (w,\sigma, (\cg,\pre(e))$ for all $w \in W$, $\sigma\in\IS^n$, and $e \in E$ with $\Sq \odot \IS^n,(w,\sigma) \models \pre(e)$, in particular the three worlds in $(*)$ and the $e,f,g$ above with preconditions $\theta$. On the other hand, clearly, a pair of worlds in this relation cannot be bisimilar, as the additional $a$-links and $b$-links allow shorter paths to a $01$-world. Differently said, as bounded bisimilarity implies the same truth value for formulas of at most that modal depth, the worlds in such a pair satisfy different formulas. (See Ex.~\ref{example.iis} for $n=1$.)

This contradicts our assumption that $\U$ is update equivalent to $\IS$ and thus concludes the proof.
\end{proof}

Prop.~\ref{prop.ppppp} holds for any countable set of local variables $P$. In the proof of Prop.~\ref{prop.ppppp} we only need two: $P = \{p_a,p_b\}$. When $P$ is countably infinite there is a shorter proof of Prop.~\ref{prop.ppppp}, given below.

\begin{proof} \label{ex.uq2}

Let $P$ be countably infinite. Suppose towards a contradiction that there is an action model $\U$ with $[\U]$ (or $[\U,e]$) in the logical language (so that the domain of $\U$ is necessarily finite) that is update equivalent to $\byz$. As $\U$ is finite and $P$ is countably infinite, there exists a $q_a \in P$ not occurring in any of the preconditions of the actions in the domain of $\U$. Now consider epistemic model $M''$ as in Example~\ref{ex.uq} but with $q_a$ true in $w_1$ and false in $w_2$ and with $p_a$ true in both worlds. When executing $\U$ in $M''$, the update $M'' \otimes \U$ will never get the required asymmetry of $M'' \odot \byz$, because any action (point) $e$ that is executable in $w_1$ is also executable in $w_2$, as for any $p \in P\setminus\{q_a\}$, $p \in L(w_1)$ iff $p \in L(w_2)$. In particular we therefore will have that $(w_1,I,\dots) \sim_b (w_2,I,\dots)$ iff $(w_1,\cg^{ab},\dots) \sim_b (w_2,\cg^{ab},\dots)$. (An argument involving $\sim_a$ and $\sim_b$ similar to the one in the proof of Prop.~\ref{prop.ppppp} is omitted for brevity.)

More simply said, if we were to execute $\U(\byz)$ of Example~\ref{ex.uq} in that model $M''$, the following model would result (as $p_a$ is true in $w_1$ and $w_2$, the alternatives with precondition $\neg p_a$ never execute):

\medskip

\scalebox{.8}{
\begin{tikzpicture}
\node (m) at (-5.9,0) {$M'':$};
\node (bm00) at (-4.9,0) {$(w_1)$};
\node (am01) at (-4.9,2) {$(w_2)$};
\node (m00) at (-4,0) {$p_aq_a$};
\node (m01) at (-4,2) {$p_a\overline{q_a}$};
\draw[-] (m00) -- node[left] {$b$} (m01);
\node (m) at (-.85,0) {$M'' \odot \U(\byz):$};
\node (b00) at (2,0) {$(w_1, (I, p_a))$};
\node (b10) at (8.2,0) {$(w_1, (R^{ab}, p_a))$};
\node (a01) at (2,2) {$(w_2, (I, p_a))$};
\node (a11) at (8.2,2) {$(w_2, (R^{ab}, p_a))$};
\node (00) at (3.5,0) {$p_aq_a$};
\node (10) at (6.5,0) {$p_aq_a$};
\node (01) at (3.5,2) {$p_a\overline{q_a}$};
\node (11) at (6.5,2) {$p_a\overline{q_a}$};
\draw[-] (00) -- node[above] {$a$} (10);
\draw[-] (00) -- node[left] {$b$} (01);
\draw[-] (01) -- node[above] {$a$} (11);
\draw[-] (10) -- node[left] {$b$} (11);
\end{tikzpicture}
}
\end{proof}

\begin{corollary}
Communication pattern logic and action model logic are incomparable in update expressivity.
\end{corollary}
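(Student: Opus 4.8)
The plan is to observe that this corollary follows immediately by unfolding the definition of incomparability in update expressivity and appealing to the two preceding propositions. Recall that two languages $\lang$ and $\lang'$ are declared incomparable in update expressivity precisely when $\lang$ is not at least as update expressive as $\lang'$ \emph{and} $\lang'$ is not at least as update expressive as $\lang$. So the first step is simply to split the claim into these two constituent halves and match each to a result already in hand.

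For the first half, I would invoke Proposition~\ref{prop.prop}, which establishes that communication pattern logic ($\lang^\circ$) is not at least as update expressive as action model logic ($\lang^\times$): the action model $\U$ for which $\Sq \odot \IS \odot \IS$ is bisimilar (indeed isomorphic) to $\Sq \otimes \U$ has no update-equivalent communication pattern, a cardinality argument since updating $\Sq$ with any of the four communication patterns over two agents yields at most $16$ worlds whereas the bisimulation-minimal $\Sq \odot \IS \odot \IS$ has $36$. For the converse half, I would invoke Proposition~\ref{prop.ppppp}, which shows that action model logic is not at least as update expressive as communication pattern logic, witnessed by the communication pattern $\IS$: no action model $\U$ can be update equivalent to $\IS$, since for $n > \log_3 2(md(\U){+}1)$ the model $\Sq \odot \IS^n \otimes \U$ fails to be a circular $ab$-chain and so is not bisimilar to $\Sq \odot \IS^{n+1}$.

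Combining the two directions, neither language is at least as update expressive as the other, which is exactly the definition of incomparability, and the corollary follows. I expect no genuine obstacle at this stage: all the substantive work lies in Propositions~\ref{prop.prop} and~\ref{prop.ppppp}, and the corollary is a one-line composition of those two statements with the definition. The only point worth a moment's care is purely bookkeeping, namely to confirm that the quantifier structure in the definition (``for every update modality of one language there is an update-equivalent modality of the other'') is indeed negated by each proposition via the exhibited witness update ($\U$ in one direction, $\IS$ in the other), so that both non-inclusions hold as stated.
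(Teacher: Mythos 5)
Your proposal is correct and matches the paper's intent exactly: the corollary is stated without further proof precisely because it is the immediate conjunction of Prop.~\ref{prop.prop} and Prop.~\ref{prop.ppppp} with the definition of incomparability, and your summaries of the two witnessing arguments are accurate.
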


\section{Communication patterns for history-based structures} \label{sect:hist}

Example~\ref{example.iis} demonstrated that interpreted systems are not closed under update with communication patterns. We therefore could not obtain a result for update expressivity for the class of interpreted systems. In this section we show that this is after all possible if we adjust the structures in which we execute updates and simultaneously adjust the definition of the update. In order to store the sequence of past events we generalize our epistemic models to history-based epistemic models \cite{jfaketal.lcc:2006,hvdetal.FAMAS:2013}. Simultaneously, we change the semantics of the update with a communication pattern namely by having this depend on the number of previous updates that already took place, what is known as the number of previous \emph{rounds} in an oblivious protocol arbitrarily often executing that communication pattern. The change consists in recording the information of previous rounds in designated history variables, that store the \emph{view} for each agent on all previous rounds. These variables are also local.

\begin{example} \label{example.history}
When updating epistemic model $(\Sq,11)$ with communication pattern $(\IS,\cg^{ab})$, we record that $\cg^{ab} a = \{a\}$ and $\cg^{ab} b = \{a,b\}$ in the resulting world $(11,\cg^{ab})$ by indexing these sets with the names of the agents, so as $\{a\}_a$ and $\{a,b\}_b$, that we write as $a_a$ and $ab_b$. These are local variables. Then, when updating $(\Sq \odot \IS, (11,\cg^{ab}))$ with $(\IS,\cg^{ba})$, we record the entire history so far for $a$ and $b$, where $a$ but not $b$ also receives $b$'s history of the previous round, as $((a,ab).ab)_a$ for agent $a$ and $(ab.b)_b$ for agent $b$. 

We explain the first. As $a$ receives information from $b$, and by default from itself, $\cg^{ba} a = \{a,b\}$, written as $ab$, is preceded by the list $(\{a\},\{a,b\})$ containing $\cg^{ab} a = \{a\}$ and $\cg^{ab} b = \{a,b\}$ of the previous round, which is written as $(a,ab)$. The expression $(a,ab).ab$ is the \emph{view} of agent $a$ on the history, which is a tree. This view is indexed with the name $a$ of the agent, such that $((a,ab).ab)_a$ is a local variable for agent $a$, wherein the views of $a$ and of $b$ in the previous round are lexicographically ordered.

And so on for every next round. Such history variables are designated local variables,  initially false.

We adapt the semantics of update $\odot$ by making history variables for a given round of communication true after the update representing that round. We name this semantics $\dot\odot$.
\begin{itemize}
\item in $(\Sq,11)$, variables $p_a$ and $p_b$ are true and all others false; 
\item in $(\Sq \dot\odot \IS, (11,\cg^{ab}))$, variables $p_a,p_b,a_a,ab_b$ are true and all others false; 
\item in $(\Sq \dot\odot \IS \dot\odot \IS, (11,\cg^{ab},\cg^{ba}))$, variables $p_a,p_b,a_a,ab_b,((a,ab).ab)_a, (ab.b)_b$ are true and \dots 
\end{itemize}
The $\dot\odot$ semantics is then closed for the class of interpreted systems. We proceed with formalities.
\end{example}

\begin{definition}[View, history variable]
Let a communication pattern $\cp$ be given. A \emph{history} is a member $\sigma \in \cp^*$ (a finite sequence of communication graphs in $\cp$). The \emph{view} of $a \in A$ on history $\sigma$ is defined as:
\[\begin{array}{llllll}
\view_a(\epsilon) & := & \emptyset \hspace{2cm} &
\view_a(\sigma.\cg) & := & \view_{\cg a}(\sigma).\cg a
\end{array}\]
where $\view_{\cg a}(\sigma)$ is the ordered list of views $\view_b(\sigma)$ for $b \in \cg a$. The set of \emph{history variables} is $\SSigma :=$ $\{ (\view_a(\sigma))_a \mid \sigma \text{ a history}, a \in A\}$. Also, $\Sigma^n := \{ (\view_a(\sigma))_a \in \SSigma \mid |\sigma|=n, a \in A\}$, and $\Sigma^{<{n}} = \Union_{m < n} \Sigma^m$.
\end{definition}
The view of agent $a$ on history $\sigma$ defines a {\bf tree} with root $\cg a$ where $\cg$ is the last element of $\sigma$. A history variable for $a$ is nothing but the view of $a$ of a history $\sigma$, subscripted with $a$, denoting a local variable. 
The set of views is known as the \emph{full-information protocol} \cite{MosesT88}. We now model the arbitrary iteration of a communication pattern in an epistemic model, while keeping track of the previous rounds by way of history variables. The definition is for agents $A$ and variables $P \union \SSigma$ (and not, as before, for $A$ and $P$).

\begin{definition}[History epistemic model]
Given an epistemic model $M=(W,\sim,L)$, a communication pattern $\cp$, and $n \in \mathbb N$, a {\em history epistemic model} $M \dot\odot \cp^n$ is defined as follows. For $n=0$, $M \dot\odot \cp^0 = M$. For $n\geq 0$, given $M \dot\odot \cp^n = (W \times \cp^n,\sim,L)$, we define $M \dot\odot \cp^{n+1} := (W \times \cp^{n+1},\sim',L')$\footnote{Allowing slight abuse of the notation $\cp^n$.} such that:
\begin{itemize}
\item $(w,\sigma.\cg) \sim'_a (w', \sigma'.\cg')$ iff $(w,\sigma) \sim_{\cg a} (w',\sigma')$ and $\cg a = \cg' a$;
\item $L'(w,\sigma.\cg) := L(w,\sigma) \union \{(\view_b(\sigma.\cg))_b \mid b \in A \}$.
\end{itemize}
\end{definition}
The domain of $M \dot\odot \cp^n$ is $W \times \cp^n$, so that domain elements have shape $(w,\sigma)$. As $M \dot\odot \cp^0 = M$, all history variables in $M \dot\odot \cp^0$ are false. This means that no round of communication has taken place.

The difference between the $\dot\odot$ update and the $\odot$ update is therefore \emph{only} in the labeling of local variables: we now require a countably infinite set of local history variables such that in each round for each agent the entire history is again recorded by making such a variable true. We will see that this guarantees that interpreted systems are closed under update.

Given the $\dot\odot$ update, the history-based semantics is now as expected, and unlike the previous semantics it has the property that the update of an interpreted system remains an interpreted system.

\begin{definition}[History-based semantics]
Given $M  \dot\odot \cp^n = (W \times \cp^n,\sim,L)$ and $(w,\sigma) \in W$,  define \emph{satisfaction relation} $\models$ by induction on $\phi\in\lang$ (where $p \in P$, $a \in A$, $B \subseteq A$, $\cp$ a communication pattern, $\cg\in\cp$, $\sigma\in\cp^n$, and $\tau \in \cp^*$ --- that is, $\tau$ is an arbitrary history).
\[ \begin{array}{lcl}

M \dot\odot \cp^n, (w,\sigma) \models p_a & \text{iff} & p_a \in L(w)\\

M \dot\odot \cp^n, (w,\sigma) \models (\view_a(\tau))_a & \text{iff} & (\view_a(\tau))_a \in L(w,\sigma)\\
M \dot\odot \cp^n, (w,\sigma)  \models \neg\phi & \text{iff} & M \dot\odot \cp^n, (w,\sigma)  \not\models \phi\\
M \dot\odot \cp^n, (w,\sigma)  \models \phi\et\psi & \text{iff} & M \dot\odot \cp^n, (w,\sigma)  \models \phi \text{ and } M \dot\odot \cp^n, (w,\sigma) \models \psi \\
M \dot\odot \cp^n, (w,\sigma)  \models D_B \phi & \text{iff} & M \dot\odot \cp^n, (v,\tau)  \models \phi \text{ for all } (v,\tau) \sim_B (w,\sigma) \\
M \dot\odot \cp^n, (w,\sigma) \models [\cp,\cg]\phi & \text{iff} & M \dot\odot \cp^{n+1}, (w,\sigma.\cg) \models \phi
\end{array} \]
\end{definition}

\begin{proposition} \label{prop.xxxxis}
Let interpreted system $M$ and $\cp$ be given. Then $M \dot\odot \cp^n$ is an interpreted system. 
\end{proposition}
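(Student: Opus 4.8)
The plan is to prove, by induction on $n$, the full interpreted-system biconditional $(w,\sigma) \sim_a (w',\sigma') \Leftrightarrow L(w,\sigma)_a = L(w',\sigma')_a$ for every agent $a$ and every pair of worlds of $M \dot\odot \cp^n$. The base case $n=0$ is immediate, since $M \dot\odot \cp^0 = M$ is an interpreted system by hypothesis. For the inductive step I assume the biconditional for $M \dot\odot \cp^n$ and expand both sides of the target biconditional for two worlds $(w,\sigma.\cg)$ and $(w',\sigma'.\cg')$ of $M \dot\odot \cp^{n+1}$ separately. On the relational side, the definition of $\sim'_a$ gives $(w,\sigma.\cg)\sim'_a(w',\sigma'.\cg')$ iff $\cg a = \cg' a$ and $(w,\sigma) \sim_{\cg a} (w',\sigma')$; since $\sim_{\cg a} = \Inter_{b\in\cg a}\sim_b$, the induction hypothesis rewrites this as $\cg a = \cg' a$ together with $L(w,\sigma)_b = L(w',\sigma')_b$ for every $b \in \cg a$. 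On the valuation side, the definition of $L'$ yields $L'(w,\sigma.\cg)_a = L(w,\sigma)_a \union \{(\view_a(\sigma.\cg))_a\}$, because among the freshly recorded variables $\{(\view_b(\sigma.\cg))_b \mid b\in A\}$ only the one subscripted $a$ is local to $a$. Moreover this fresh variable lies in $\Sigma^{n+1}$ whereas $L(w,\sigma)_a \subseteq P \union \Sigma^{<n+1}$, so the union is disjoint and $L'(w,\sigma.\cg)_a = L'(w',\sigma'.\cg')_a$ iff $L(w,\sigma)_a = L(w',\sigma')_a$ and $\view_a(\sigma.\cg)=\view_a(\sigma'.\cg')$.

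It then remains to match the two expanded conditions. Decomposing the view along its recursive definition $\view_a(\sigma.\cg) = \view_{\cg a}(\sigma).\cg a$, the equality $\view_a(\sigma.\cg)=\view_a(\sigma'.\cg')$ holds iff the roots agree, $\cg a = \cg' a$, and the ordered child lists agree, i.e.\ $\view_b(\sigma)=\view_b(\sigma')$ for all $b\in\cg a$. The \emph{locality} direction ($\sim'_a$ implies equal $a$-valuations) is then routine: from the relational side one has $\cg a = \cg' a$ and $L(w,\sigma)_b = L(w',\sigma')_b$ for all $b\in\cg a$; taking $b=a$ gives $L(w,\sigma)_a = L(w',\sigma')_a$, and since for $n\geq 1$ the round-$n$ view of each $b$ is itself the recorded local variable $(\view_b(\sigma))_b \in L(w,\sigma)_b$ (and for $n=0$ all views are $\emptyset$), equal local valuations force $\view_b(\sigma)=\view_b(\sigma')$, hence equal fresh variables. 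The converse direction is the one that carries the content of the proposition.

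Hence the heart of the argument, and the step I expect to be the main obstacle, is the \emph{faithfulness} of the recorded view: that $\view_b(\sigma)=\view_b(\sigma')$ already forces the \emph{entire} local valuation to agree, $L(w,\sigma)_b = L(w',\sigma')_b$. This is the classical fact that in the full-information protocol an agent's view is a complete invariant of its local state, so that the single fresh history variable can stand in for everything $b$ has learned, including the values received through the communication tree carried at its leaves. I would establish it by a nested induction on the round number, reading off from the recursive definition $\view_b(\tau.\cg)=\view_{\cg b}(\tau).\cg b$ both the set of last-round senders and, through the child subtrees, those senders' previous-round views, to which the inner induction hypothesis applies. Once faithfulness is available, $\view_b(\sigma)=\view_b(\sigma')$ for all $b\in\cg a$ gives $L(w,\sigma)_b = L(w',\sigma')_b$ for all such $b$; the outer induction hypothesis turns this into $(w,\sigma)\sim_{\cg a}(w',\sigma')$, and together with $\cg a = \cg' a$ we obtain $(w,\sigma.\cg)\sim'_a(w',\sigma'.\cg')$, closing the induction and the proof.
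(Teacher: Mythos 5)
Your overall strategy coincides with the paper's: induction on $n$, unfolding $\sim'_a$ on one side and $L'(w,\sigma.\cg)_a = L(w,\sigma)_a \cup \{(\view_a(\sigma.\cg))_a\}$ on the other, and matching the two via the recursive structure of $\view$. You are in fact more explicit than the paper about the disjointness of the fresh round-$(n{+}1)$ variable from $P\cup\Sigma^{<{n+1}}$ and about which direction of the biconditional is the substantive one. The trouble is precisely the lemma you single out as the heart of the argument: the claimed \emph{faithfulness} of the view, i.e.\ that $\view_b(\sigma)=\view_b(\sigma')$ forces $L(w,\sigma)_b = L(w',\sigma')_b$. With the paper's definition this is false. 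The view is seeded by $\view_b(\epsilon)=\emptyset$ and depends only on the history, never on the world; its leaves carry no message content, contrary to your phrase ``the values received through the communication tree carried at its leaves''. Hence for any $w\neq w'$ with $L(w)_b\neq L(w')_b$ and the \emph{same} history $\sigma$, the two views coincide while the local valuations differ on $P_b$, and your nested induction bottoms out at $\emptyset=\emptyset$ without ever reaching the $P$-part of the valuation. Concretely, in $\Sq\dot\odot\IS$ the worlds $(11,\cg^{ab})$ and $(01,\cg^{ab})$ give agent $b$ the identical local valuation $\{p_b,(\view_b(\cg^{ab}))_b\}$, although they are not $b$-indistinguishable there because $11\not\sim_a 01$ in $\Sq$ and $a\in\cg^{ab}b$: agent $b$ has learned $p_a$ without any local variable recording it. This is exactly an instance of the converse direction that faithfulness was supposed to supply.

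To be fair, the paper's own proof is no better off here: its step $(*)$ only argues from ``$\cg a=\cg' a$ and $L(w,\sigma)_b=L(w',\sigma')_b$ for all $b\in\cg a$'' to equality of the $a$-valuations, and is silent on the converse, which is where the interpreted-system property (as opposed to mere locality) lives. The statement becomes true, and your argument goes through essentially as written, if the view is made world-dependent and seeded with the initial local state, $\view^w_a(\epsilon):=L(w)_a$, as in the usual full-information protocol; then the leaves of $\view^w_b(\sigma)$ determine $L(w)_c$ for every $c$ whose value has reached $b$, your inner induction closes, and the outer induction yields the proposition. As the definitions stand, however, your proof (like the paper's) establishes only that $M\dot\odot\cp^n$ is local, not that it is an interpreted system.
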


\begin{proof}
Let $M \dot\odot \cp^n =  (W \times \cp^n,\sim,L)$.
We are required to show that $(w,\sigma) \sim_a (w',\sigma')$ iff $L(w,\sigma)_a = L(w',\sigma')_a$. 

For $n=0$ this is because $M$ is an interpreted system. 

Let us now assume $M \dot\odot \cp^n$ is an interpreted system and consider $M \dot\odot \cp^{n+1}$, and $\cg,\cg' \in \cp$. We then have that (where $|\sigma|=|\sigma'|=n$):

\bigskip

\noindent $
(w,\sigma.\cg) \sim_a (w',\sigma'.\cg') \\
\Eq \hfill \text{by definition of } \sim_a \\
\cg a = \cg' a \text { and } (w,\sigma) \sim_{\cg a} (w',\sigma') \\
\Eq \\
\cg a = \cg' a, \text{ and for all } b \in \cg a: (w,\sigma) \sim_b (w',\sigma') \\
\Eq \hfill \text{inductive hypothesis} \\
\cg a = \cg' a, \text{ and for all } b \in \cg a: L(w,\sigma)_b = L(w',\sigma')_b \\
\Eq \hfill (*) \\
L(w,\sigma.\cg)_a = L(w',\sigma'.\cg')_a
$

\bigskip

$(*)$: By definition, we have that $L(w,\sigma.\cg) = L(w,\sigma) \union \{(\view_b(\sigma.\cg))_b \mid b \in A \}$. Therefore, for agent $a$, we have that $L(w,\sigma.\cg)_a = L(w,\sigma)_a \union \{(\view_a(\sigma.\cg))_a\}$. As $a \in \cg a$, we may assume by induction that $L(w,\sigma)_a = L(w',\sigma')_a$. It therefore remains to show that $(\view_a(\sigma.\cg))_a = (\view_a(\sigma'.\cg'))_a$. By the definition of $\view$, this is equivalent to requiring that $\cg a = \cg' a$, and that $(\view_b(\sigma))_b = (\view_b(\sigma'))_b$ for all $b \in \cg a$. The latter is given above. Concerning the former: from the inductive assumption that $L(w,\sigma)_b = L(w',\sigma')_b$ for all $b \in \cg a$ and the definition of $\view$ for these agents $b$ it follows that $(\view_b(\sigma))_b = (\view_b(\sigma'))_b$ for all $b \in \cg a$.
\end{proof}

In order to compare the update expressivity of action models and communication patterns in this semantics, we must also change Def.~\ref{def.induced} of induced action model $\U(\cp)$. There are now infinitely many local variables, so that the description of a valuation is no longer a formula but an infinite conjunction. However, for every round of communication a description of the valuation of a finite subset is sufficient.
\begin{definition}[Induced action model for round n]
The induced action model $\U^n(\cp) = (E,\sim,\pre)$ for round $n$ of iterated execution of $\cp$ is defined as follows, where $\cg,\cg' \in \cp$, $Q,Q' \subseteq P \union \Sigma^{<{n}}$, and $a \in A$:
\[ \begin{array}{lcl}
E & = & \cp \times \power(P \union \Sigma^{<{n}}) \\
(\cg,Q) \sim_a (\cg',Q') & \text{iff} & \cg a = \cg' a \text{ and } Q_{\cg a} = Q'_{\cg' a} \\
\pre(\cg,Q) & = & \delta_{Q,P \union \Sigma^{<{n}}}
\end{array}\]
\end{definition}
Although $P \union \SSigma$ infinite, $P \union \Sigma^{<{n}}$ is finite. Note that $\U(\cp)$ is $\U^1(\cp)$, where $\delta_Q$ is now $\delta_{Q,P}$, as $\Sigma^{<1}=\Sigma^0=\emptyset$. We recall the definition of $\delta_{Q,P \union \Sigma^{<{n}}}$ from Sect.~\ref{sec.language}.
From Prop.~\ref{prop.xxxxis} and Prop.~\ref{prop.xxxx} we directly obtain:
\begin{proposition} \label{prop.xxxxis2}
Let interpreted system $M$ and communication pattern $\cp$ be given. Then $M \dot\odot \cp^n$ is bisimilar to $M \otimes \U^1(\cp) \otimes \dots \otimes \U^n(\cp)$.
\end{proposition}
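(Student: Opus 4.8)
The plan is to argue by induction on $n$, using Prop.~\ref{prop.xxxxis} to stay inside the class of interpreted systems and Prop.~\ref{prop.xxxx} to trade one round of communication for one induced action model at a time.

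For the base case $n=0$ both sides equal $M$ (the empty $\otimes$-composition applied to $M$ is $M$), so the identity is the required bisimulation. For the inductive step I assume $M \dot\odot \cp^n \bisim M \otimes \U^1(\cp) \otimes \dots \otimes \U^n(\cp)$. By definition $M \dot\odot \cp^{n+1}$ is one further $\dot\odot$-round applied to $M \dot\odot \cp^n$, and by Prop.~\ref{prop.xxxxis} the model $M \dot\odot \cp^n$ is an interpreted system whose relevant (true) local variables range over the finite set $P \union \Sigma^{<n+1}$. The construction of Def.~\ref{def.induced}, instantiated on this variable set, is precisely $\U^{n+1}(\cp)$. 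Hence Prop.~\ref{prop.xxxx}, applied to the interpreted system $M \dot\odot \cp^n$, gives $M \dot\odot \cp^{n+1} \bisim (M \dot\odot \cp^n) \otimes \U^{n+1}(\cp)$; here the $\dot\odot$-round of the left-hand side and the $\odot$-round of Prop.~\ref{prop.xxxx} share the same frame and agree on $P$, differing only in the freshly recorded round-$(n+1)$ history variables.

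It then remains to transport the right-hand model through the inductive hypothesis. For this I invoke the standard fact that collective bisimilarity is preserved by action-model update: if $(M_1,w_1)\bisim(M_2,w_2)$ and the preconditions of $\U$ are bisimulation-invariant, then $M_1\otimes\U \bisim M_2\otimes\U$. The preconditions of $\U^{n+1}(\cp)$ are the Boolean descriptions $\delta_{Q,P\union\Sigma^{<n+1}}$, which are bisimulation-invariant, so applying this with the inductive hypothesis yields $(M \dot\odot \cp^n)\otimes\U^{n+1}(\cp) \bisim (M \otimes \U^1(\cp)\otimes\dots\otimes\U^n(\cp))\otimes\U^{n+1}(\cp)$. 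Chaining this with the bisimilarity of the previous paragraph closes the induction.

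The step I expect to be the main obstacle is the bookkeeping of the history variables, which is exactly where $\dot\odot$ and plain $\otimes$ threaten to diverge. Because $\U^{n+1}(\cp)$ mentions the variables of $\Sigma^{<n+1}$ in its preconditions, the preservation step is sound only if the inductive bisimilarity already matches these history variables; equivalently, $M \otimes \U^1(\cp)\otimes\dots\otimes\U^n(\cp)$ must carry the same history-variable valuation as $M \dot\odot \cp^n$, so that the descriptions $\delta_{Q,P\union\Sigma^{<n+1}}$ select the same actions on both sides. The careful part of the argument is therefore to verify that each induced-action-model step and each $\dot\odot$-step record the round history in synchrony across the two constructions; once this synchronisation is established the frame-level verification is inherited directly from Prop.~\ref{prop.xxxx} and requires no new computation.
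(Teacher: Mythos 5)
Your route is the one the paper itself indicates (it offers no proof beyond ``From Prop.~\ref{prop.xxxxis} and Prop.~\ref{prop.xxxx} we directly obtain''): induct on $n$, use Prop.~\ref{prop.xxxxis} to stay inside interpreted systems, use Prop.~\ref{prop.xxxx} for one round, and transport the result through the inductive hypothesis via preservation of collective bisimilarity under product update. You also put your finger on exactly the right pressure point: the two sides must agree on the history variables so that the round-$(n{+}1)$ preconditions $\delta_{Q,P\union\Sigma^{<n+1}}$ select matching actions. The problem is that you assert this synchronisation can be ``verified'' when, with the definitions as given, it is false. The paper's action models have no postconditions and the product update keeps $L^\times(v,f)=L(v)$, so every history variable is false everywhere in $M\otimes\U^1(\cp)\otimes\dots\otimes\U^n(\cp)$, whereas $M\dot\odot\cp^n$ makes the round-$1$ through round-$n$ view variables true. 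This is not merely a failure of the \textbf{atoms} clause on $\SSigma$: since all $\Sigma^1$-variables are false after the first product update, the only executable actions of $\U^2(\cp)$ are those with $Q\inter\Sigma^1=\emptyset$, and its accessibility relation then compares only the $P$-parts of the $Q$'s; the second round therefore conveys no information about the first round's communication graphs, and the construction degenerates to $M\otimes\U(\cp)\otimes\U(\cp)$. For $M=\Sq$ and $\cp=\IS$ this is precisely the failure exhibited in Example~\ref{example.iis}, so the frames themselves are not bisimilar.

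To close the gap you must make the round-$n$ induced action model actually \emph{write} the round-$n$ history: give $\U^{n+1}(\cp)$ postconditions (factual change) setting $(\view_b(\sigma.\cg))_b$ true, with the view computed from $\cg$ and the $\Sigma^{<n+1}$-part of $Q$, or equivalently replace the plain product $\otimes$ by one that adds these variables to the valuation. With that amendment your induction is sound: the base case, the appeal to Prop.~\ref{prop.xxxxis}, the one-round step via a history-aware version of Prop.~\ref{prop.xxxx}, and the preservation step all go through, and the preconditions on both sides then select the same actions because the inductive bisimulation matches $P\union\Sigma^{<n+1}$. As written, though, the step you flag as ``careful bookkeeping'' is not bookkeeping but a claim that fails for the stated constructions; in fairness, the paper's one-line justification glosses over the same point.
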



By abbreviation inductively define $(\cp^0,\epsilon):=\epsilon$ and $(\cp^{n+1},\sigma.\cg):= (\cp^n,\sigma).(\cp,\cg)$, where $\sigma \in \cp^n$. Recalling the definition of $[\cp]\phi$ as $\Et_{\cg\in\cp} [\cp,\cg]\phi$, we let $[\cp^n]\phi$ stand for $\Et_{\sigma\in\cp^n}[\cp^n,\sigma]\phi$. Just as $[\cp^n]\phi$ is equivalent to $[\cp]^n\phi$, $[\cp,\sigma]\phi$ is equivalent to $[\cp,\cg_1]\dots[\cp,\cg_n]\phi$, where $\sigma=\cg_1\dots\cg_n$.


In the $\odot$ semantics, the answer to the question whether a communication pattern $(\cp,\cg)$ is update equivalent to an action model $(\U(\cp),T)$ where $T=\{(\cg,Q)\mid Q \subseteq P\}$, on the class of epistemic models, was `no' (Example~\ref{example.iis}). This now becomes the question whether in the history-based $\dot\odot$ semantics an iterated communication pattern $(\cp^n,\sigma)$ is update equivalent to a multi-pointed action model on the class of interpreted systems with empty histories. The answer to that is `yes'. However, communication pattern modalities occurring in a formula may not be interpreted in the empty history. For example, given $[\cp,\cg](p_a \imp D_B[\cp,\cg']p_b)$, subformula $[\cp,\cg']p_b$ will be interpreted in some world $(w,\cg)$, not in some world $(w,\epsilon)$. We want it equivalent to some formula of shape $[\cp,\cg\cg']\psi$. We therefore show that any formula $\phi\in\lang^\circ$ is equivalent to one wherein all subformulas $[\cp^n,\sigma]\psi$ have that $\psi\in\lang^-$ (without dynamic modalities). All dynamic modalities are then interpreted in an empty history epistemic model. 

Define the {\em iterated update normal form} (IUNF), the language $\lang^\circ_{\mathsf{iunf}}$ (with members $\phi$) by BNF as:
\[\begin{array}{lll}
\phi & := & p_a \mid \neg \phi \mid \phi \et \phi \mid D_B \phi \mid [\cp^n,\sigma]\psi \\
\psi & := & p_a \mid \neg \psi \mid \psi \et \psi \mid D_B \psi
\end{array}\]
\begin{lemma} \label{lemmab}
Every formula in $\lang^\circ$ is equivalent to one in $\lang^\circ_{\mathsf{iunf}}$, in iterated update normal form.
\end{lemma}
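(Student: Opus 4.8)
The plan is to set up a small collection of reduction equivalences for the $\dot\odot$ semantics that push a single communication-graph modality $[\cp,\cg]$ inward through the Boolean and epistemic connectives and merge it with any dynamic modality it meets, and then to apply these by induction. The required validities, analogous to the reduction axioms of \cite{cdrv:2022,AgotnesW17,Baltag20}, are $[\cp,\cg]p_a \eq p_a$ (the valuation of a $P$-variable is unaffected by an update), $[\cp,\cg]\neg\psi \eq \neg[\cp,\cg]\psi$ and $[\cp,\cg](\psi\et\chi) \eq [\cp,\cg]\psi \et [\cp,\cg]\chi$ (both because each $(w,\sigma)$ has the unique $\dot\odot$-successor $(w,\sigma.\cg)$, so the modality is functional), and crucially the distributed-knowledge axiom
\[ [\cp,\cg] D_B \psi \;\eq\; D_{\cg B} \; \Et_{\cg' \in \cp,\ \cg' B \equiv \cg B} [\cp,\cg']\psi . \]

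The only axiom requiring real work is the last one, which I would verify directly from the history-based semantics. Evaluating $[\cp,\cg]D_B\psi$ at $(w,\sigma)$ in $M \dot\odot \cp^n$ amounts to evaluating $D_B\psi$ at $(w,\sigma.\cg)$ in $M \dot\odot \cp^{n+1}$. By the definition of $\sim'_a$, a world $(v,\rho'.\cg')$ has $(v,\rho'.\cg')\sim_a(w,\sigma.\cg)$ iff $\cg' a = \cg a$ and $(v,\rho')\sim_{\cg a}(w,\sigma)$; intersecting over $a\in B$, the $B$-reachable worlds from $(w,\sigma.\cg)$ are exactly the $(v,\rho'.\cg')$ with $\cg' B \equiv \cg B$ and $(v,\rho')\sim_{\cg B}(w,\sigma)$, where $\cg B = \Union_{a\in B}\cg a$. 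Rewriting ``$\psi$ holds at $(v,\rho'.\cg')$'' as ``$[\cp,\cg']\psi$ holds at $(v,\rho')$'' and collecting the finitely many (as $A$, hence $\cp$, is finite) graphs $\cg'$ with $\cg' B \equiv \cg B$ under a conjunction inside $D_{\cg B}$ yields the displayed equivalence. I expect this to be the main obstacle: getting the group $\cg B$ and the side condition $\cg' B \equiv \cg B$ exactly right, and checking that reflexivity of communication graphs (so that $B \subseteq \cg B$, keeping $\cg B$ nonempty) makes $D_{\cg B}$ well defined.

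The normalisation itself I would organise as two nested structural inductions, chosen so that no separate terminating complexity measure is needed. The auxiliary claim is: for every $\cg\in\cp$ and every $\chi$ already in $\lang^\circ_{\mathsf{iunf}}$, $[\cp,\cg]\chi$ is equivalent to an IUNF formula. This is proved by induction on $\chi$: the atomic, negation and conjunction cases use the first three axioms together with the induction hypothesis on strictly smaller subformulas; the case $\chi = D_B\chi'$ uses the distributed-knowledge axiom, applying the induction hypothesis to each $[\cp,\cg']\chi'$ (legitimate since $\chi'$ is a strict subformula) and wrapping the resulting conjunction in $D_{\cg B}$; and the dynamic-leaf case $\chi = [\cp^m,\tau]\psi$ with $\psi\in\lang^-$ needs no recursion, since $[\cp,\cg][\cp^m,\tau]\psi \eq [\cp^{m+1},\cg\tau]\psi$ (the merge stated before the lemma), which is itself an IUNF leaf. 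With the auxiliary claim in hand, the lemma follows by an induction on the original $\phi\in\lang^\circ$: atoms, negations, conjunctions and $D_B$-formulas are handled by applying the induction hypothesis to immediate subformulas and reassembling within IUNF, and for $\phi = [\cp,\cg]\phi'$ I first replace $\phi'$ by an equivalent IUNF formula $\chi$ (valid inside $[\cp,\cg]$ because $\dot\odot$ is functional, so equivalents may be substituted under the modality) and then invoke the auxiliary claim on $[\cp,\cg]\chi$. Since every appeal to an induction hypothesis is to a strict subformula, both inductions are well founded; this is precisely why phrasing the transformation as ``push $[\cp,\cg]$ into an already-normalised formula'' sidesteps the usual difficulty that the $D_B$ axiom can enlarge the formula.
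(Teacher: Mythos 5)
Your proof is correct and takes essentially the same route as the paper's: the paper defines a translation $t$ built from exactly these reduction equivalences (its clause $t([\cp,\cg][\cp',\cg']\phi):=t([\cp,\cg]\,t([\cp',\cg']\phi))$ is precisely your ``normalize the inner formula first, then push one modality through the normal form''), outsourcing the verification of the $D_B$ axiom to \cite{cdrv:2022}, whereas you supply the semantic check and the explicit well-foundedness argument that the paper omits. The only (harmless) discrepancies are that your conjunction sits inside $D_{\cg B}$ rather than outside (equivalent, since $D_{\cg B}$ distributes over conjunction), and that $[\cp,\cg]p_a\eq p_a$ fails for history variables $(\view_a(\tau))_a$ --- but the atomic case needs no reduction at all, since $[\cp,\cg]p_a$ is already in iterated update normal form, which is how the paper treats it.
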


\begin{proof}
We define a translation $t: \lang^\circ \imp \lang^\circ_{\mathsf{iunf}}$. We prove by induction that any $\phi$ is equivalent to $t(\phi)$. All clauses are trivial, and the one for the dynamic modality has a subinduction. The subinduction uses the reduction axioms for communication patterns found in \cite{cdrv:2022}.
\[\begin{array}{lll}
t([\cp,\cg]p_a) & := & [\cp,\cg]p_a  \\
t([\cp,\cg](\phi\et\psi)) & := & t([\cp,\cg]\phi) \et t([\cp,\cg]\psi)  \\
t([\cp,\cg]\neg\phi) & := & \neg t([\cp,\cg]\phi)  \\
t([\cp,\cg] D_B \phi) & := & \Et_{\cg' B \equiv \cg B} D_{\cg B} t([\cp,\cg']\phi)  \\
t([\cp,\cg] [\cp',\cg'] \phi) & := & t([\cp,\cg]t([\cp',\cg'] \phi))
\end{array}\]
In particular, we have that \[\begin{array}{llllllll}
t([\cp,\cg](\phi\vel\psi)) &=& t([\cp,\cg]\neg(\neg\phi\et\neg\psi)) &=& \neg t([\cp,\cg](\neg\phi\et\neg\psi)) &= \\ \neg (t([\cp,\cg]\neg\phi)\et t([\cp,\cg]\neg\psi)) &=& \neg (\neg t([\cp,\cg]\phi)\et \neg t([\cp,\cg]\psi)) &=& t([\cp,\cg]\phi)\vel t([\cp,\cg]\psi) \end{array}\] We recall that notation $\cg' B \equiv \cg B$ was defined in Sect.~\ref{sec.structures}. Further proof details are omitted.
\end{proof}


\begin{proposition} \label{prop.xxxxis3}
Action model logic is at least as update expressive as communication pattern logic on the class of interpreted systems, in the history-based semantics.
\end{proposition}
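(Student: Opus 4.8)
The plan is to establish update equivalence between each iterated communication pattern modality $[\cp^n,\sigma]$ of $\lang^\circ$ and a suitable (multi-pointed) action model modality, working on the class of interpreted systems in the $\dot\odot$ semantics. First I would use Lemma~\ref{lemmab} to reduce the problem: since every formula of $\lang^\circ$ is equivalent to one in iterated update normal form, it suffices to find, for each iterated communication pattern $(\cp^n,\sigma)$ interpreted in an \emph{empty history} interpreted system, an update-equivalent action model. This lets me restrict attention to dynamic modalities applied to static ($\lang^-$) formulas and evaluated at histories of length $0$, which is exactly the setting covered by Prop.~\ref{prop.xxxxis2}.

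The core of the argument is then essentially a repackaging of Prop.~\ref{prop.xxxxis2}. That proposition already gives that, for an interpreted system $M$, the model $M \dot\odot \cp^n$ is bisimilar to the iterated action-model update $M \otimes \U^1(\cp) \otimes \dots \otimes \U^n(\cp)$. The next step is to observe that the composition of action models is again an action model (as noted in the proof of Prop.~\ref{prop.prop}): there is a single action model $\U^* := \U^1(\cp) \otimes \dots \otimes \U^n(\cp)$ such that $[\U^1(\cp)]\dots[\U^n(\cp)]\phi \eq [\U^*]\phi$. I would take the multi-pointed structure $(\U^*, T)$ whose designated points $T$ correspond to the composite action realizing the specific schedule $\sigma$. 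Matching the point $(w,\sigma)$ of $M \dot\odot \cp^n$ to the corresponding point of $M \otimes \U^*$ under the bisimulation of Prop.~\ref{prop.xxxxis2}, this shows $(\cp^n,\sigma)$ is update equivalent to $(\U^*,T)$ on the class of interpreted systems.

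Two technical points need care. First, update equivalence as defined quantifies over \emph{all} pointed epistemic models, whereas here we only want it on the subclass of interpreted systems with empty histories; I would note explicitly that we are comparing the updates restricted to that class, which is legitimate since the proposition is stated relative to this class. Second, I must confirm that the designated point-set $T$ of $\U^*$ correctly tracks $\sigma$: the composite action should be the sequence $(\cg_1,\dots,\cg_n)$ of the schedule, with preconditions assembled from the $\delta_{Q,P\union\Sigma^{<i}}$ of each $\U^i(\cp)$, and these compose consistently because the round-$i$ induced action model reads exactly the history variables in $\Sigma^{<i}$ that were made true in previous rounds.

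I expect the main obstacle to be the bookkeeping in identifying the designated points $T$ and verifying that the composition $\U^* = \U^1(\cp) \otimes \dots \otimes \U^n(\cp)$ genuinely has the induced preconditions one wants — specifically, that the preconditions of the composite actions, which involve the history variables of earlier rounds, are correctly forced to hold precisely at the images of the $(w,\sigma)$ worlds. This is where the interpreted-system hypothesis (via Prop.~\ref{prop.xxxxis}, which guarantees closure under $\dot\odot$) does the real work, ensuring each $\U^i(\cp)$ behaves as an induced action model in the sense of Prop.~\ref{prop.xxxx} at every intermediate round, so that the chain of bisimulations in Prop.~\ref{prop.xxxxis2} can be invoked round by round.
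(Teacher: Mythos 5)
Your proposal is correct and follows essentially the same route as the paper's own proof: reduce to empty-history evaluation via Lemma~\ref{lemmab}, invoke Prop.~\ref{prop.xxxxis2} for the bisimilarity with $M \otimes \U^1(\cp) \otimes \dots \otimes \U^n(\cp)$, compose the induced action models into a single action model, and designate the multi-point set $T$ of composite actions tracking the schedule $\sigma$. The paper likewise leaves the bookkeeping of $T$ and the precondition assembly as omitted details, so your added remarks on those points only make the argument more explicit.
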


\begin{proof}
Let an interpreted system $M$ and a communication pattern $\cp$ be given. Then $M \dot\odot \cp^n$ is bisimilar to $M \otimes \U^1(\cp) \otimes \dots \otimes \U^n(\cp)$ (Prop.~\ref{prop.xxxxis2}). Consider the action model $\U$ that is the \emph{composition} of $\U^1(\cp)$, \dots, $\U^n(\cp)$, where we note that, unlike communication patterns, action models are indeed closed under composition (see \cite{baltagetal:1998} for the definition of action model composition).

Let us now consider what action model some $(\cp^n,\sigma)$ is update equivalent to. We can assume that modalities $[\cp^n,\sigma]$ are only interpreted in $M$, a history epistemic model for an empty history (Lemma~\ref{lemmab}). Iterated communication pattern $\cp^n$ is clearly update equivalent to $\U$. It is almost worded as such in Prop.~\ref{prop.xxxxis3}. Also, any $(\cp^n,\sigma)$ is update equivalent to $(\U,T)$, where, if $\sigma = \cg^1\cg^2\dots\cg^n$, \[ T = \{ (\cg^1,Q^1)(\cg^2,Q^2)\dots(\cg^n,Q^n) \mid Q^1 \subseteq P, Q^2 \subseteq \Sigma^1, \dots, Q^n \subseteq\Sigma^{n-1} \}.\] Details are omitted. Note that $P \union \Sigma^1 \union \dots \Sigma^{n-1} = P \union \Sigma^{<n}$, the set of all atoms considered at round $n$.
\end{proof}


It is easy to see that Prop.~\ref{prop.prop} still holds for the history-based semantics. Therefore:
\begin{corollary}
Action model logic is more update expressive than communication pattern logic on the class of interpreted systems, in the history-based semantics.
\end{corollary}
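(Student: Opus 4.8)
The plan is to obtain the Corollary directly from the definition of \emph{(strictly) more update expressive than}, instantiated with $\lang^\times$ (action model logic) for $\lang$ and $\lang^\circ$ (communication pattern logic) for $\lang'$. By that definition we must establish (i) that $\lang^\times$ is at least as update expressive as $\lang^\circ$, and (ii) that $\lang^\circ$ is \emph{not} at least as update expressive as $\lang^\times$, both on the class of interpreted systems and in the history-based semantics. Part (i) is exactly Prop.~\ref{prop.xxxxis3}, which I would simply invoke. All the remaining work is in part (ii), that is, in substantiating the claim made just before the Corollary that Prop.~\ref{prop.prop} survives the move to the history-based semantics.

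For part (ii) I would exhibit a single action model update that no communication pattern can match, reusing a witness from the proof of Prop.~\ref{prop.prop} but choosing one whose obstruction is a structural invariant rather than a world count. The $36$-versus-$16$ witness is the wrong choice here, since in the history-based semantics $\Sq\dot\odot\IS^2$ already has $36$ worlds, so the crude counting bound no longer separates the two sides once iterated communication patterns are admitted as update modalities (as they are, by the treatment of $(\cp^n,\sigma)$ in Prop.~\ref{prop.xxxxis3}). Instead I would take the \emph{selective-revelation} witness: with $a$ carrying local variables $p_a$ and $q_a$ over an interpreted system, let $\U$ be the action model that truthfully reveals the value of $p_a$ alone. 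The invariant that defeats every communication pattern is that a sender always transmits its \emph{entire} local state at once: in any round, if $b$ receives $a$'s message then $b$ learns $a$'s whole view, which subsumes the values of both $p_a$ and $q_a$, and the recorded history variable cannot selectively drop $q_a$. Hence after any (possibly iterated) communication-pattern update $M\dot\odot\cp^n$, agent $b$ knows the value of $p_a$ if and only if it knows the value of $q_a$, whereas after $\U$ agent $b$ knows $p_a$ but is still uncertain about $q_a$. These knowledge states cannot be collectively bisimilar, so no communication-pattern modality is update equivalent to $\U$ on this interpreted system, giving (ii).

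An alternative witness is the environment's announcement of whether $p_a\vel p_b$ (the first witness in Prop.~\ref{prop.prop}). Its robust invariant is that a communication-pattern update never splits the model along a \emph{global} condition: the domain of $M\dot\odot\cp^n$ is $W\times\cp^n$, and the induced partition is generated only by agents learning one another's local variables, so it can never isolate the single $\ov{p}_a\ov{p}_b$-world from the cluster of the other three the way the announcement does. I would keep the selective-revelation witness as the primary one, since its invariant is the most transparent under iteration and under the extra history variables; here one checks only that the chosen base model is an interpreted system (the action-model output need not be).

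Combining (i) and (ii) through the definition yields the Corollary. I expect the only genuine obstacle to be the one already highlighted: confirming that the chosen witness resists \emph{iterated} communication patterns and the additional history variables of the $\dot\odot$ semantics, rather than merely single-round patterns on plain epistemic models. This is precisely why the argument should rest on an invariant that holds uniformly across all rounds $n$ --- all-or-nothing transmission of a sender's local state --- rather than on the world-counting estimate used in Prop.~\ref{prop.prop}.
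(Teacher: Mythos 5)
Your proposal is correct and follows essentially the same route as the paper: Prop.~\ref{prop.xxxxis3} supplies the ``at least as update expressive'' direction, and strictness comes from observing that Prop.~\ref{prop.prop} survives the move to the history-based semantics on interpreted systems. The paper leaves that second step as ``it is easy to see''; your choice of the selective-revelation witness, with the all-or-nothing-transmission invariant that is stable under iteration and under the history variables, and your caution about the world-counting witness, fill in exactly the detail the paper omits.
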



This story on history-based semantics could just as well have been told for sequences $\cp_1,\dots,\cp_n$ of possibly different communication patterns, instead of for $n$ iterations of a given communication pattern $\cp$. We would then get models $M \dot\odot \cp_1 \dot\odot \dots \dot\odot \cp_n$ instead of models $M \dot\odot\cp^n$, and we would get induced action models $M \otimes \U^1(\cp_1) \otimes \dots \otimes \U^n(\cp_n)$, etcetera. However, in distributed computing it is common to consider arbitrary iteration of the same communication pattern (the mentioned oblivious model).

Although in such a generalization we can continue to view histories as sequences of communication graphs, it is important to realize that the same communication graph can then be the point of a different communication pattern, which may give their execution a different meaning. For example, recall $\cg^{ab} b = \{a,b\} \union I$. Given $\cg^{ab} \in \IS$, agent $b$ is uncertain whether $a$ has received his message. But given $\cg^{ab} \in \{\cg^{ab}\}$, the singleton communication pattern, agent $b$ knows that agent $a$ has not received his message.

\section{Conclusions and further research} \label{sect:concl}

We have shown that action model logic and communication pattern logic are incomparable in update expressivity on epistemic models, and that action model logic is more update expressive than communication pattern logic on interpreted systems. It seems promising to investigate communication patterns further, also on epistemic models that are not local (clearly, incomparability does not depend on that). Induced action models are exponentially larger than communication patterns. Communication patterns intuitively specify system dynamics that abstracts from message content. Results in temporal epistemics on synchronous and asynchronous computation should carry over to dynamic epistemics. 

\paragraph*{Acknowledgements} We thank the TARK reviewers for their comments. 
This work was partially supported by Programa de
Apoyo a Proyectos de Investigaci\'on e Innovaci\'on
Tecnol\'ogica (PAPIIT), project grants IN108720 and IN108723. Diego A. Vel\'azquez is the recipient of
a fellowship from CONACyT.

\bibliographystyle{eptcs}
\bibliography{biblio2023}

\end{document}